\let\accentvec\vec
\let\vec\accentvec
\newcommand{\vc}[1]{\boldsymbol{#1}}
\newcommand{\eps}{\varepsilon}
\newcommand{\pr}{\mathbb{P}}
\newcommand{\poly}{\operatorname{poly}}
\newcommand{\cS}{\mathcal{S}}
\newcommand{\cL}{\mathcal{L}}
\newcommand{\Sp}{\mathrm{S}}
\newcommand{\T}{\mathrm{T}}
\newcommand{\rhoq}{\rho_{\mathrm{q}}}
\newcommand{\rhou}{\rho_{\mathrm{u}}}
\newcommand{\alphaq}{\alpha_{\mathrm{q}}}
\newcommand{\alphau}{\alpha_{\mathrm{u}}}
\newcommand{\ip}[2]{\left\langle{#1},{#2}\right\rangle}
\newcommand{\forceindent}{\parindent=2em\indent\parindent=0pt\relax}
\newtheorem{heuristic}{Heuristic}
\begin{document}
\mainmatter

\title{Sieving for closest lattice vectors\\ (with preprocessing)}
\titlerunning{Sieving for closest lattice vectors (with preprocessing)}
\toctitle{Lecture Notes in Computer Science}

\author{Thijs Laarhoven}
\authorrunning{Thijs Laarhoven}
\tocauthor{Thijs Laarhoven}

\institute{IBM Research\\R\"{u}schlikon, Switzerland\\\href{mailto:mail@thijs.com}{\UrlFont mail@thijs.com}}

\maketitle

\begin{abstract}
Lattice-based cryptography has recently emerged as a prime candidate for efficient and secure post-quantum cryptography. The two main hard problems underlying its security are the shortest vector problem (SVP) and the closest vector problem (CVP). Various algorithms have been studied for solving these problems, and for SVP, lattice sieving currently dominates in terms of the asymptotic time complexity: one can heuristically solve SVP in time $2^{0.292d + o(d)}$ in high dimensions $d$ [Becker--Ducas--Gama--Laarhoven, SODA'16]. Although several SVP algorithms can also be used to solve CVP, it is not clear whether this also holds for heuristic lattice sieving methods. The best time complexity for CVP is currently $2^{0.377d + o(d)}$ [Becker--Gama--Joux, ANTS'14].

\forceindent In this paper we revisit sieving algorithms for solving SVP, and study how these algorithms can be modified to solve CVP and its variants as well. Our first method is aimed at solving one problem instance and minimizes the overall time complexity for a single CVP instance with a time complexity of $2^{0.292d + o(d)}$. Our second method minimizes the amortized time complexity for several instances on the same lattice, at the cost of a larger preprocessing cost. Using nearest neighbor searching with a balanced space-time tradeoff, with this method we can solve the closest vector problem with preprocessing (CVPP) with $2^{0.636d + o(d)}$ space and preprocessing, in $2^{0.136d + o(d)}$ time, while the query complexity can be further reduced to $2^{0.059d + o(d)}$ at the cost of $2^{d + o(d)}$ space and preprocessing, or even to $2^{\eps d + o(d)}$ for arbitrary $\eps > 0$, at the cost of preprocessing time and memory complexities of $(1/\eps)^{O(d)}$. 

\forceindent For easier variants of CVP, such as approximate CVP and bounded distance decoding (BDD), we further show how the preprocessing method achieves even better complexities. For instance, we can solve approximate CVPP with large approximation factors $\kappa$ with polynomial-sized advice in polynomial time if $\kappa = \Omega(\sqrt{d / \log d})$. This heuristically closes the gap between the decision-CVPP result of [Aharonov--Regev, FOCS'04] (with equivalent $\kappa$) and the search-CVPP result of [Dadush--Regev--Stephens-Davidowitz, CCC'14] (which required larger $\kappa$).


\keywords{lattices, sieving algorithms, approximate nearest neighbors, shortest vector problem (SVP), closest vector problem (CVP), bounded distance decoding (BDD)}
\end{abstract}


\section{Introduction}

\paragraph{\bf Hard lattice problems.} Lattices are discrete subgroups of $\mathbb{R}^d$. More concretely, given a basis $B = \{\vc{b}_1, \dots, \vc{b}_d\} \subset \mathbb{R}^d$, the lattice $\cL = \cL(B)$ generated by $B$ is defined as $\cL(B) = \left\{\sum_{i=1}^d \lambda_i \vc{b}_i: \lambda_i \in \mathbb{Z}\right\}$. Given a basis of a lattice $\cL$, the Shortest Vector Problem (SVP) asks to find a shortest non-zero vector in $\cL$ under the Euclidean norm, i.e., a non-zero lattice vector $\vc{s}$ of norm $\|\vc{s}\| = \lambda_1(\cL) := \min_{\vc{v} \in \cL \setminus \{\vc{0}\}} \|\vc{v}\|$. Given a basis of a lattice and a target vector $\vc{t} \in \mathbb{R}^d$, the Closest Vector Problem (CVP) asks to find a vector $\vc{s} \in \cL$ closest to $\vc{t}$ under the Euclidean distance, i.e.\ such that $\|\vc{s} - \vc{t}\| = \min_{\vc{v} \in \cL} \|\vc{v} - \vc{t}\|$.

These two hard problems are fundamental in the study of lattice-based cryptography, as the security of these schemes is directly related to the hardness of SVP and CVP in high dimensions. Various other hard lattice problems, such as Learning With Errors (LWE) and the Shortest Integer Solution (SIS) problem are closely related to SVP and CVP, and many reductions between these and other hard lattice problems are known; see e.g.\ \cite[Figure 3.1]{laarhoven12kolkata} or \cite{stephens16} for an overview. These reductions show that being able to solve CVP efficiently implies that almost all other lattice problems can also be solved efficiently in the same dimension, which makes the study of the hardness of CVP even more important for choosing parameters in lattice-based cryptography.

\paragraph{\bf Algorithms for SVP and CVP.} Although SVP and CVP are both central in the study of lattice-based cryptography, algorithms for SVP have received somewhat more attention, including a benchmarking website to compare different algorithms~\cite{svp}. Various SVP methods have been studied which can solve CVP as well, such as enumeration (see e.g.\ \cite{kannan83, fincke85, gama10, micciancio15}), discrete Gaussian sampling~\cite{aggarwal15, aggarwal15b}, constructing the Voronoi cell of the lattice~\cite{agrell02, micciancio10}, and using a tower of sublattices~\cite{becker14}. On the other hand, for the asymptotically fastest method in high dimensions for SVP\footnote{To obtain provable guarantees, sieving algorithms are commonly modified to facilitate a somewhat artificial proof technique, which drastically increases the time complexity beyond e.g.\ the discrete Gaussian sampler and the Voronoi cell algorithm~\cite{ajtai01, nguyen08, pujol09, micciancio10b}. On the other hand, if some natural heuristic assumptions are made to enable analyzing the algorithm's behavior, then sieving clearly outperforms these methods. We focus on heuristic sieving in this paper.}, lattice sieving, it is not known how to solve CVP with similar costs as SVP. 

After a series of theoretical works on constructing efficient heuristic sieving algorithms~\cite{nguyen08, micciancio10b, wang11, zhang13, laarhoven15crypto, laarhoven15latincrypt, becker15nns, becker16cp, becker16lsf} as well as practical papers studying how to speed up these algorithms even further~\cite{milde11, schneider11, schneider13, bos14, fitzpatrick14, ishiguro14, mariano14, mariano14b, mariano15, mariano16pdp, mariano16}, the best time complexity for solving SVP currently stands at $2^{0.292d + o(d)}$~\cite{becker16lsf, mariano16}. Although for various other methods the complexities for solving SVP and CVP are similar~\cite{gama10, micciancio10, aggarwal15b}, one can only guess whether the same holds for lattice sieving methods. To date, the best heuristic time complexity for solving CVP in high dimensions stands at $2^{0.377d + o(d)}$, due to Becker--Gama--Joux~\cite{becker14}.

\subsection{Contributions}

In this paper we revisit heuristic lattice sieving algorithms, as well as the recent trend to speed up these algorithms using nearest neighbor searching, and we investigate how these algorithms can be modified to solve CVP and its generalizations. We present two different approaches for solving CVP with sieving, each of which we argue has its own merits.

\paragraph{\bf Adaptive sieving.} In \textit{adaptive sieving}, we adapt the entire sieving algorithm to the problem instance, including the target vector. As the resulting algorithm is tailored specifically to the given CVP instance, this leads to the best asymptotic complexity for solving a single CVP instance out of our two proposed methods: $2^{0.292d + o(d)}$ time and space. This method is very similar to solving SVP with lattice sieving, and leads to equivalent asymptotics on the space and time complexities as for SVP. The corresponding space-time tradeoff is illustrated in Figure~\ref{fig:1}, and equals that of \cite{becker16lsf} for solving SVP.

\paragraph{\bf Non-adaptive sieving.} Our main contribution, \textit{non-adaptive sieving}, takes a different approach, focusing on cases where several CVP instances are to be solved on the same lattice. The goal here is to minimize the costs of computations depending on the target vector, and spend more time on preprocessing the lattice, so that the amortized time complexity per instance is smaller when solving many CVP instances on the same lattice. This is very closely related to the Closest Vector Problem with Preprocessing (CVPP), where the difference is that we allow for exponential-size preprocessed space. Using nearest neighbor techniques with a balanced space-time tradeoff, we show how to solve CVPP with $2^{0.636d + o(d)}$ space and preprocessing, in $2^{0.136d + o(d)}$ time. A continuous tradeoff between the two complexities can be obtained, where in the limit we can solve CVPP with $(1/\eps)^{O(d)}$ space and preprocessing, in $2^{\eps d + o(d)}$ time. This tradeoff is depicted in Figure~\ref{fig:1}.

A potential application of non-adaptive sieving is as a subroutine within enumeration methods. As described in e.g.\ \cite{gama10}, at any given level in the enumeration tree, one is attempting to solve a CVP instance in a lower-dimensional sublattice of $\cL$, where the target vector is determined by the path chosen from the root to the current node in the tree. That means that if we can preprocess this sublattice such that the amortized time complexity of solving CVPP is small, then this could speed up processing the bottom part of the enumeration tree. This in turn might help speed up the lattice basis reduction algorithm BKZ~\cite{schnorr87, schnorr94, chen11}, which commonly uses enumeration as its SVP subroutine, and is key in assessing the security of lattice-based schemes. As the preprocessing needs to be performed once, CVPP algorithms with impractically large preprocessing costs may not be useful, but we show that with sieving the preprocessing costs can be quite small.

\begin{figure}[!t]
{\center
\includegraphics{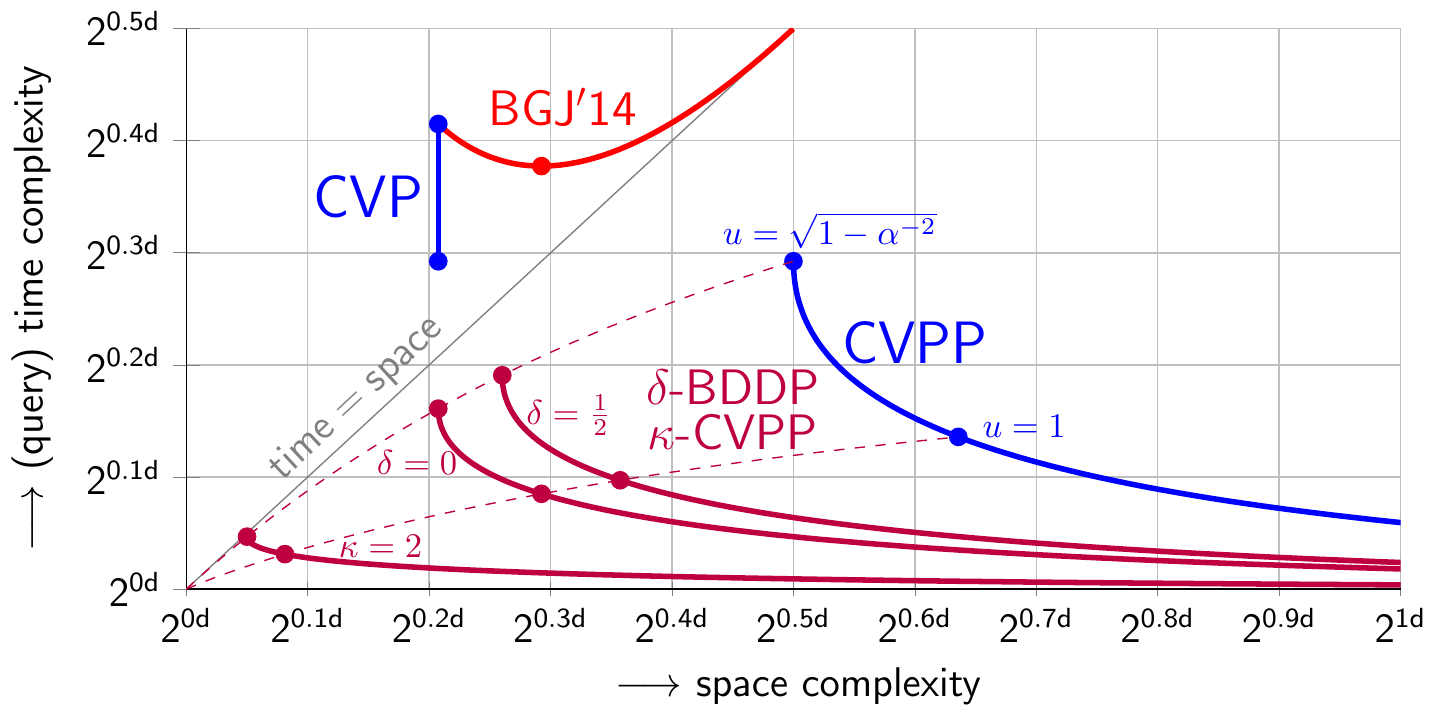}}
\caption{Heuristic complexities for solving the Closest Vector Problem (CVP), the Closest Vector Problem with Preprocessing (CVPP), Bounded Distance Decoding with Preprocessing ($\delta$-BDDP), and the Approximate Closest Vector Problem with Preprocessing ($\kappa$-CVPP). The red curve shows CVP complexities of Becker--Gama--Joux~\cite{becker14}. The left blue curve denotes CVP complexities of adaptive sieving. The right blue curve shows exact CVPP complexities using non-adaptive sieving. Purple curves denote relaxations of CVPP corresponding to different parameters $\delta$ (BDD radius) and $\kappa$ (approximation factor). Note that exact CVPP corresponds to $\delta$-BDDP with $\delta = 1$ and to $\kappa$-CVPP with $\kappa = 1$.\label{fig:1}}
\end{figure}

\paragraph{\bf Outline.} The remainder of the paper is organized as follows. In Section~\ref{sec:pre} we describe some preliminaries, such as sieving algorithms and a useful result on nearest neighbor searching. Section~\ref{sec:ad} describes adaptive sieving and its analysis for solving CVP without preprocessing. Section~\ref{sec:non} describes the preprocessing approach to solving CVP, with complexity analyses for exact CVP and some of its relaxations. 


\section{Preliminaries}
\label{sec:pre}

\subsection{Lattice sieving for solving SVP}

Heuristic lattice sieving algorithms for solving the shortest vector problem all use the following basic property of lattices: if $\vc{v}, \vc{w} \in \cL$, then their sum/difference $\vc{v} \pm \vc{w} \in \cL$ is a lattice vector as well. Therefore, if we have a long list $L$ of lattice vectors stored in memory, we can consider combinations of these vectors to obtain new, shorter lattice vectors. To make sure the algorithm makes progress in finding shorter lattice vectors, $L$ needs to contain a lot of lattice vectors; for vectors $\vc{v}, \vc{w} \in \cL$ of similar norm, the vector $\vc{v} - \vc{w}$ is shorter than $\vc{v}, \vc{w}$ iff the angle between $\vc{v}, \vc{w}$ is smaller than $\pi/3$, which for random vectors $\vc{v}, \vc{w}$ occurs with probability $(3/4)^{d/2 + o(d)}$. The expected space complexity of heuristic sieving algorithms follows directly from this observation: if we draw $(4/3)^{d/2 + o(d)}$ random vectors from the unit sphere, we expect a large number of pairs of vectors to have angle less than $\pi/3$, leading to many short difference vectors. This is exactly the heuristic assumption used in analyzing these sieving algorithms: when normalized, vectors in $L$ follow the same distribution as vectors sampled uniformly at random from the unit sphere.
\begin{heuristic} \label{heur:1}
When normalized, the list vectors $\vc{w} \in L$ behave as i.i.d.\ uniformly distributed random vectors from the unit sphere $\cS^{d-1} := \{\vc{x} \in \mathbb{R}^d: \|\vc{x}\| = 1\}$. 
\end{heuristic}
Therefore, if we start by sampling a list $L$ of $(4/3)^{d/2 + o(d)}$ long lattice vectors, and iteratively consider combinations of vectors in $L$ to find shorter vectors, we expect to keep making progress. Note that naively, combining pairs of vectors in a list of size $(4/3)^{d/2 + o(d)} \approx 2^{0.208d + o(d)}$ takes time $(4/3)^{d + o(d)} \approx 2^{0.415d + o(d)}$.

\paragraph{\bf The Nguyen-Vidick sieve.} The heuristic sieve algorithm of Nguyen and Vidick~\cite{nguyen08} starts by sampling a list $L$ of $(4/3)^{d/2 + o(d)}$ long lattice vectors, and uses a \textit{sieve} to map $L$, with maximum norm $R := \max_{\vc{v} \in L} \|\vc{v}\|$, to a new list $L'$, with maximum norm at most $\gamma R$ for $\gamma < 1$ close to $1$. By repeatedly applying this sieve, after $\poly(d)$ iterations we expect to find a long list of lattice vectors of norm at most $\gamma^{\poly(d)} R = O(\lambda_1(\cL))$. The final list is then expected to contain a shortest vector of the lattice. Algorithm~\ref{alg:nv} in Appendix~\ref{app:alg} describes a sieve equivalent to Nguyen-Vidick's original sieve, to map $L$ to $L'$ in $|L|^2$ time.

\paragraph{\bf Micciancio and Voulgaris' GaussSieve.} Micciancio and Voulgaris used a slightly different approach in the GaussSieve~\cite{micciancio10b}. This algorithm reduces the memory usage by immediately \textit{reducing} all pairs of lattice vectors that are sampled. The algorithm uses a single list $L$, which is always kept in a state where for all $\vc{w}_1, \vc{w}_2 \in L$, $\|\vc{w}_1 \pm \vc{w}_2\| \geq \|\vc{w}_1\|, \|\vc{w}_2\|$, and each time a new vector $\vc{v} \in \cL$ is sampled, its norm is reduced with vectors in $L$. After the norm can no longer be reduced, the vectors in $L$ are reduced with $\vc{v}$. Modified list vectors are added to a stack to be processed later (to maintain the pairwise reduction-property of $L$), and new vectors which are pairwise reduced with $L$ are added to $L$. Immediately reducing all pairs of vectors means that the algorithm uses less time and memory in practice, but at the same time Nguyen and Vidick's heuristic proof technique does not apply here. However, it is commonly believed that the same bounds $(4/3)^{d/2 + o(d)}$ and $(4/3)^{d + o(d)}$ on the space and time complexities hold for the GaussSieve. Pseudocode of the GaussSieve is given in Algorithm~\ref{alg:gauss} in Appendix~\ref{app:alg}.

\subsection{Nearest neighbor searching}

Given a data set $L \subset \mathbb{R}^d$, the nearest neighbor problem asks to preprocess $L$ such that, when given a query $\vc{t} \in \mathbb{R}^d$, one can quickly return a nearest neighbor $\vc{s} \in L$ with distance $\|\vc{s} - \vc{t}\| = \min_{\vc{w} \in L} \|\vc{w} - \vc{t}\|$. This problem is essentially identical to CVP, except that $L$ is a finite set of unstructured points, rather than the infinite set of all points in a lattice $\cL$.

\paragraph{\bf Locality-Sensitive Hashing/Filtering (LSH/LSF).} A celebrated technique for finding nearest neighbors in high dimensions is Locality-Sensitive Hashing (LSH)~\cite{indyk98, wang14}, where the idea is to construct many random partitions of the space, and store the list $L$ in hash tables with buckets corresponding to regions. Preprocessing then consists of constructing these hash tables, while a query $\vc{t}$ is answered by doing a lookup in each of the hash tables, and searching for a nearest neighbor in these buckets. More details on LSH in combination with sieving can be found in e.g.\ \cite{laarhoven15crypto, laarhoven15latincrypt, becker15nns, becker16cp}.

Similar to LSH, Locality-Sensitive Filtering (LSF)~\cite{becker16lsf, laarhoven15nns} divides the space into regions, with the added relaxation that these regions do not have to form a partition; regions may overlap, and part of the space may not be covered by any region. This leads to improved results compared to LSH when $L$ has size exponential in $d$~\cite{becker16lsf, laarhoven15nns}. Below we restate one of the main results of~\cite{laarhoven15nns} for our applications. The specific problem considered here is: given a data set $L \subset \cS^{d-1}$ sampled uniformly at random, and a random query $\vc{t} \in \cS^{d-1}$, return a vector $\vc{w} \in L$ such that the angle between $\vc{w}$ and $\vc{t}$ is at most $\theta$. The following result further assumes that the list $L$ contains $n = (1 / \sin \theta)^{d + o(d)}$ vectors.

\begin{lemma} \label{lem:nns} \cite[Corollary 1]{laarhoven15nns}
Let $\theta \in (0, \frac{1}{2} \pi)$, and let $u \in [\cos \theta, 1/\cos \theta]$. Let $L \subset \cS^{d-1}$ be a list of $n = (1 / \sin \theta)^{d + o(d)}$ vectors sampled uniformly at random from $\cS^{d-1}$. Then, using spherical LSF with parameters $\alphaq = u \cos \theta$ and $\alphau = \cos \theta$, one can preprocess $L$ in time $n^{1 + \rhou + o(1)}$, using $n^{1 + \rhou + o(1)}$ space, and with high probability answer a random query $\vc{t} \in \cS^{d-1}$ correctly in time $n^{\rhoq + o(1)}$, where: 
	\begin{align}
	n^{\rhoq} &= \left(\frac{\sin^2 \theta \, (u \cos \theta + 1)}{u  \cos \theta - \cos 2 \theta}\right)^{d/2}, \quad n^{\rhou} = \left(\frac{\sin^2 \theta}{1 - \cot^2 \theta\left(u^2 - 2 u \cos \theta + 1\right)}\right)^{d/2}. \label{eq:main3}
	\end{align}
\end{lemma}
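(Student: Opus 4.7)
The plan is to instantiate the asymmetric spherical filter scheme of \cite{laarhoven15nns} directly. The data structure uses $T$ independent filters, each specified by a direction $\vc{s}_i$ drawn uniformly from $\cS^{d-1}$. A list vector $\vc{w} \in L$ is inserted into the $i$-th bucket iff $\langle \vc{w}, \vc{s}_i\rangle \geq \alphau$, and a query $\vc{t}$ probes the $i$-th bucket iff $\langle \vc{t}, \vc{s}_i\rangle \geq \alphaq$. Every probed bucket is scanned and any candidate with $\langle \vc{w}, \vc{t}\rangle \geq \cos\theta$ is returned. By Heuristic~\ref{heur:1} the analysis reduces to cap-volume calculations on $\cS^{d-1}$.

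The first step is to compute three exponentially small cap probabilities, for $\vc{s}$ uniform on $\cS^{d-1}$ and fixed unit vectors $\vc{w}, \vc{t}$ at angle $\phi$:
\begin{align*}
p_u &:= \pr[\langle \vc{w}, \vc{s}\rangle \geq \alphau] = (1 - \alphau^2)^{d/2 + o(d)}, \\
p_q &:= \pr[\langle \vc{t}, \vc{s}\rangle \geq \alphaq] = (1 - \alphaq^2)^{d/2 + o(d)}, \\
P(\phi) &:= \pr[\langle \vc{w}, \vc{s}\rangle \geq \alphau \text{ and } \langle \vc{t}, \vc{s}\rangle \geq \alphaq].
\end{align*}
The exponent of $P(\phi)$ is obtained by projecting $\vc{s}$ onto the plane spanned by $\vc{w}, \vc{t}$ and applying Laplace's method; by Lagrangian duality, $P(\phi) = (1 - \|\vc{s}^{*}\|^2)^{d/2 + o(d)}$, where $\vc{s}^{*}$ is the minimum-norm point in the cone defining $P(\phi)$. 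Writing $\vc{s}^{*} = \mu_1 \vc{w} + \mu_2 \vc{t}$ with KKT multipliers $\mu_1, \mu_2 \geq 0$ and solving the $2\times 2$ Gram system gives $\|\vc{s}^{*}\|^2 = (\alphau^2 - 2\alphau\alphaq\cos\phi + \alphaq^2)/\sin^2\phi$. Substituting $\alphau = \cos\theta$, $\alphaq = u\cos\theta$, $\phi = \theta$ yields $\|\vc{s}^{*}\|^2 = \cot^2\theta\,(u^2 - 2u\cos\theta + 1)$, which is the key algebraic input.

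Next I would set $T = \poly(d)/P(\theta)$ so that each truly near pair (angle $\leq \theta$) collides in at least one filter with probability $1 - o(1)$. The space and preprocessing cost is then dominated by $T \cdot n \cdot p_u = n \cdot p_u / P(\theta)$; combined with $n = (1/\sin\theta)^{d + o(d)}$ this directly yields the $n^{1 + \rhou + o(1)}$ bound in \eqref{eq:main3}. For the query, the lookup cost $T \cdot p_q = p_q / P(\theta)$ can be rewritten via the identity $\sin^2\theta\,(1 - \|\vc{s}^{*}\|^2) = (u\cos\theta - \cos 2\theta)(1 - u\cos\theta)$ into the form $\bigl((1 + u\cos\theta)\sin^2\theta/(u\cos\theta - \cos 2\theta)\bigr)^{d/2 + o(d)}$, matching $n^{\rhoq + o(1)}$. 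The scanning cost $T \cdot n \cdot P(\pi/2)$, where Heuristic~\ref{heur:1} treats non-near list vectors as lying at typical angle $\pi/2$ from $\vc{t}$, is of the same order or smaller throughout the stated range.

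The main obstacle is verifying that $u \in [\cos\theta, 1/\cos\theta]$ is precisely the regime in which the analysis above is tight. The multipliers evaluate to $\mu_1 = \cos\theta\,(1 - u\cos\theta)/\sin^2\theta$ and $\mu_2 = \cos\theta\,(u - \cos\theta)/\sin^2\theta$; requiring both to be non-negative is equivalent to $u \in [\cos\theta, 1/\cos\theta]$. Outside this range one of the two half-space constraints defining $P(\theta)$ is inactive at the optimum, the closed-form exponent degenerates to that of a single cap, and the formulas in \eqref{eq:main3} no longer describe the dominant behavior. The remaining ingredients --- concentration of bucket sizes around their means, a union bound over the $n$ non-near candidates per probed filter, and boosting per-pair success to a high-probability guarantee over all near pairs --- follow from standard Chernoff and Markov estimates once the leading exponents are in place.
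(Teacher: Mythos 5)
The paper itself offers no proof of this lemma; it is cited verbatim as~\cite[Corollary 1]{laarhoven15nns}, so there is no in-paper argument to compare against. Your reconstruction is a faithful and essentially correct account of the spherical-LSF analysis underlying that reference. The key algebra checks out: the minimum-norm point $\vc{s}^* = \mu_1 \vc{w} + \mu_2 \vc{t}$ satisfying both hyperplane constraints has $\|\vc{s}^*\|^2 = (\alphau^2 - 2\alphau\alphaq\cos\phi + \alphaq^2)/\sin^2\phi$ (using $\|\vc{s}^*\|^2 = \mu_1\alphau + \mu_2\alphaq$ after solving the $2\times 2$ system), and specializing to $\alphau = \cos\theta$, $\alphaq = u\cos\theta$, $\phi = \theta$ gives $\cot^2\theta\,(u^2 - 2u\cos\theta + 1)$; the identity $\sin^2\theta\,(1-\|\vc{s}^*\|^2) = (u\cos\theta - \cos 2\theta)(1 - u\cos\theta)$ holds, and cancelling the common factor $1 - u\cos\theta$ against $1 - \alphaq^2 = (1 - u\cos\theta)(1 + u\cos\theta)$ indeed recovers $n^{\rhoq}$; and the range $u \in [\cos\theta, 1/\cos\theta]$ is exactly where $\mu_1, \mu_2 \geq 0$, i.e.\ both constraints remain active. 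Your claim that scanning cost is dominated by lookup cost also checks out: it reduces to $(1-\cos^2\theta(1+u^2))/\sin^2\theta \le 1 - u^2\cos^2\theta$, equivalent to $\cot^2\theta \ge \cos^2\theta$, which is always true. Two minor remarks. First, Heuristic~\ref{heur:1} is unnecessary here --- the lemma's hypothesis already posits that $L$ is i.i.d.\ uniform on $\cS^{d-1}$, so the cap-volume calculus applies unconditionally. Second, the ``boosting over all near pairs'' step is slightly misphrased: for a random query there is, in expectation, only $\Theta(1)$ near neighbor (this is exactly why $n = (1/\sin\theta)^{d+o(d)}$ is chosen), so the relevant event is that this one near pair survives in some filter, which $T = \poly(d)/P(\theta)$ guarantees directly; there is no union bound over many near pairs to perform.
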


Applying this result to sieving for solving SVP, where $n = \sin(\frac{\pi}{3})^{-d + o(d)}$ and we are looking for pairs of vectors at angle at most $\frac{\pi}{3}$ to perform reductions, this leads to a space and preprocessing complexity of $n^{0.292d + o(d)}$, and a query complexity of $2^{0.084d + o(d)}$. As the preprocessing in sieving is only performed once, and queries are performed $n \approx 2^{0.208d + o(d)}$ times, this leads to a reduction of the complexities of sieving (for SVP) from $2^{0.208d + o(d)}$ space and $2^{0.415d + o(d)}$ time, to $2^{0.292d + o(d)}$ space and time~\cite{becker16lsf}.


\section{Adaptive sieving for CVP}
\label{sec:ad}

We present two methods for solving CVP using sieving, the first of which we call \textit{adaptive sieving} -- we adapt the entire sieving algorithm to the particular CVP instance, to obtain the best overall time complexity for solving one instance. When solving several CVP instances, the costs roughly scale linearly with the number of instances.

\subsubsection{Using one list.} The main idea behind this method is to translate the SVP algorithm by the target vector $\vc{t}$; instead of generating a long list of lattice vectors reasonably close to $\vc{0}$, we generate a list of lattice vectors close to $\vc{t}$, and combine lattice vectors to find lattice vectors even closer vectors to $\vc{t}$. The final list then hopefully contains a closest vector to $\vc{t}$. 

One quickly sees that this does not work, as the fundamental property of lattices does not hold for the lattice coset $\vc{t} + \cL$: if $\vc{w}_1, \vc{w}_2 \in \vc{t} + \cL$, then $\vc{w}_1 \pm \vc{w}_2 \notin \vc{t} + \cL$. In other words, two lattice vectors close to $\vc{t}$ can only be combined to form lattice vectors close to $\vc{0}$ or $2 \vc{t}$. So if we start with a list of vectors close to $\vc{t}$, and combine vectors in this list as in the Nguyen-Vidick sieve, then after one iteration we will end up with a list $L'$ of lattice vectors close to $\vc{0}$.

\subsubsection{Using two lists.} To make the idea of translating the whole problem by $\vc{t}$ work for the Nguyen-Vidick sieve, we make the following modification: we keep track of two lists $L = L_{\vc{0}}$ and $L_{\vc{t}}$ of lattice vectors close to $\vc{0}$ and $\vc{t}$, and construct a sieve which maps two input lists $L_{\vc{0}}, L_{\vc{t}}$ to two output lists $L_{\vc{0}}', L_{\vc{t}}'$ of lattice vectors slightly closer to $\vc{0}$ and $\vc{t}$. Similar to the original Nguyen-Vidick sieve, we then apply this sieve several times to two initial lists $(L_{\vc{0}}, L_{\vc{t}})$ with a large radius $R$, to end up with two lists $L_{\vc{0}}$ and $L_{\vc{t}}$ of lattice vectors at distance at most approximately $\sqrt{4/3} \cdot \lambda_1(\cL)$ from $\vc{0}$ and $\vc{t}$\footnote{Observe that by the Gaussian heuristic, there are $(4/3)^{d/2 + o(d)}$ vectors in $\cL$ within any ball of radius $\sqrt{4/3} \cdot \lambda_1(\cL)$. So the list size of the NV-sieve will surely decrease below $(4/3)^{d/2}$ when $R < \sqrt{4/3} \cdot \lambda_1(\cL)$.}. The argumentation that this algorithm works is almost identical to that for solving SVP, where we now make the following slightly different heuristic assumption.
\begin{heuristic} \label{heur:2}
When normalized, the list vectors $L_{\vc{0}}$ and $L_{\vc{t}}$ in the modified Nguyen-Vidick sieve both behave as i.i.d.\ uniformly distributed random vectors from the unit sphere. 
\end{heuristic}
The resulting algorithm, based on the Nguyen-Vidick sieve, is presented in Algorithm~\ref{alg:nv-adaptive}. 

\begin{algorithm}[!t]
\caption{The adaptive Nguyen-Vidick sieve for finding closest vectors}
\label{alg:nv-adaptive}
\begin{algorithmic}[1]
\Require Lists $L_{\vc{0}}, L_{\vc{t}} \subset \cL$ containing $(4/3)^{d/2 + o(d)}$ vectors at distance $\leq R$ from $\vc{0}, \vc{t}$
\Ensure Lists $L_{\vc{0}}', L_{\vc{t}}' \subset \cL$ contain $(4/3)^{d/2 + o(d)}$ vectors at distance $\leq \gamma R$ from $\vc{0}, \vc{t}$
\State Initialize empty lists $L_{\vc{0}}', L_{\vc{t}}'$ 
\For{\textbf{each} $(\vc{w}_1, \vc{w}_2) \in L_{\vc{0}} \times L_{\vc{0}}$}
	\If{$\|\vc{w}_1 - \vc{w}_2\| \leq \gamma R$}
		\State Add $\vc{w}_1 - \vc{w}_2$ to the list $L_{\vc{0}}'$
	\EndIf
\EndFor
\For{\textbf{each} $(\vc{w}_1, \vc{w}_2) \in L_{\vc{t}} \times L_{\vc{0}}$}
	\If{$\|(\vc{w}_1 - \vc{w}_2) - \vc{t}\| \leq \gamma R$}
		\State Add $\vc{w}_1 - \vc{w}_2$ to the list $L_{\vc{t}}'$
	\EndIf
\EndFor
\State \Return $(L_{\vc{0}}', L_{\vc{t}}')$ 
\end{algorithmic}
\end{algorithm}

\subsubsection{Main result.} As the (heuristic) correctness of this algorithm follows directly from the correctness of the original NV-sieve, and nearest neighbor techniques can be applied to this algorithm in similar fashion as well, we immediately obtain the following result. Note that space-time tradeoffs for SVP, such as the one illustrated in \cite[Figure 1]{becker16lsf}, similarly carry over to solving CVP, and the best tradeoff for SVP (and therefore CVP) is depicted in Figure~\ref{fig:1}. 

\begin{theorem}
Assuming Heuristic~\ref{heur:2} holds, the adaptive Nguyen-Vidick sieve with spherical LSF solves CVP in time $\T$ and space $\Sp$, with
\begin{align}
\Sp = (4/3)^{d/2 + o(d)} \approx 2^{0.208 d + o(d)}, \quad \T = (3/2)^{d/2 + o(d)} \approx 2^{0.292 d + o(d)}.
\end{align}
\end{theorem}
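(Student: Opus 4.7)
The plan is to reduce the correctness and complexity analysis of Algorithm~\ref{alg:nv-adaptive} to that of the Nguyen--Vidick sieve for SVP, exploiting Heuristic~\ref{heur:2} to ensure that both lists $L_{\vc{0}}$ and $L_{\vc{t}}$ behave like uniformly random points on spheres centered at $\vc{0}$ and $\vc{t}$. First I would verify that one pass of the modified sieve produces output lists of the claimed sizes and radii. The $L_{\vc{0}} \to L_{\vc{0}}'$ step is literally the NV-sieve, so by Heuristic~\ref{heur:1} a pair of random sphere vectors at mutual angle at most $\pi/3$ yields a difference of norm at most $\gamma R$ with probability $(3/4)^{d/2 + o(d)}$; since $|L_{\vc{0}}|^2 = (4/3)^{d + o(d)}$, a constant fraction of the $|L_{\vc{0}}|$-many desired shorter vectors survives. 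For the $L_{\vc{t}} \times L_{\vc{0}} \to L_{\vc{t}}'$ step, I would rewrite $\vc{w}_1 - \vc{w}_2 - \vc{t} = (\vc{w}_1 - \vc{t}) - \vc{w}_2$; under Heuristic~\ref{heur:2}, $\vc{w}_1 - \vc{t}$ and $\vc{w}_2$ are independent uniform vectors on spheres of radius $\approx R$, so the same sphere-geometry calculation applies and $L_{\vc{t}}'$ inherits the size $(4/3)^{d/2 + o(d)}$ and shrinking factor $\gamma < 1$.

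Second, I would iterate the sieve a polynomial number of times, arguing as in~\cite{nguyen08} that $R$ contracts geometrically until it drops below $\sqrt{4/3} \cdot \lambda_1(\cL)$. At that point a Gaussian-heuristic count shows that $L_{\vc{t}}$ contains enough distinct vectors of $\vc{t} + \cL$ within the relevant ball around $\vc{t}$ to capture a closest lattice vector, so with high probability the element of $L_{\vc{t}}$ nearest to $\vc{t}$ is an exact CVP solution. The two initial lists can be produced by standard Klein-type sampling from $\cL$ and from the coset $\vc{t} + \cL$ at large radius, which contributes negligibly to the cost.

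Third, I would import the nearest-neighbor speed-up. The naive inner loops cost $(4/3)^{d + o(d)} \approx 2^{0.415 d + o(d)}$ per iteration, so instead I would preprocess $L_{\vc{0}}$ into a spherical LSF data structure via Lemma~\ref{lem:nns} with $\theta = \pi/3$ and the balanced parameter $u = 1$, recovering the same preprocessing/query exponents used for SVP at the end of Section~\ref{sec:pre}. The key observation is that the very same data structure answers both the pair queries $\vc{w}_1 \in L_{\vc{0}}$ (to update $L_{\vc{0}}'$) and the shifted queries $\vc{w}_1 - \vc{t}$ for $\vc{w}_1 \in L_{\vc{t}}$ (to update $L_{\vc{t}}'$), because by Heuristic~\ref{heur:2} the query distribution is uniform on the sphere in both cases. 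Summing the two list passes only doubles the cost, and multiplying by $\poly(d)$ iterations is absorbed into the $o(d)$ term, giving the claimed space $(4/3)^{d/2 + o(d)}$ for the lists and time $(3/2)^{d/2 + o(d)}$.

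The main obstacle is not a calculation but the validity of Heuristic~\ref{heur:2}: one must trust that $L_{\vc{t}}$, which is generated by a process biased towards one specific target, equidistributes on its shell in the same way as the SVP list $L_{\vc{0}}$, and that this property is preserved across iterations of the sieve (in particular under the LSF filtering that retains only a non-uniform subset of candidate pairs). Once this heuristic is granted, the analysis is entirely parallel to the SVP case, and the full space-time tradeoff curve of~\cite{becker16lsf} carries over verbatim by varying the parameter $u$ in Lemma~\ref{lem:nns}.
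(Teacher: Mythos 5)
Your overall reduction to the SVP analysis is exactly what the paper does: the paper's own ``proof'' is a one-line remark that correctness follows from the Nguyen--Vidick sieve and that nearest-neighbor techniques carry over unchanged, which you flesh out correctly, including the key observation that $\vc{w}_1 - \vc{w}_2 - \vc{t} = (\vc{w}_1 - \vc{t}) - \vc{w}_2$ reduces the $L_{\vc{t}} \times L_{\vc{0}}$ step to the same sphere-geometry calculation, and that one LSF structure on $L_{\vc{0}}$ serves both kinds of queries.

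There is, however, a concrete mismatch in the space complexity. You instantiate Lemma~\ref{lem:nns} at the balanced point $u=1$, which with $\theta = \pi/3$ gives $\rho_{\mathrm{u}} > 0$: the LSF data structure has size $n^{1 + \rho_{\mathrm{u}}} = (3/2)^{d/2 + o(d)} \approx 2^{0.292d}$, not the $(4/3)^{d/2 + o(d)} \approx 2^{0.208d}$ claimed in the theorem. Your closing sentence (``giving the claimed space $(4/3)^{d/2+o(d)}$ for the lists'') silently drops the LSF overhead, which dominates at $u=1$. If one instead chooses $u = \cos\theta = 1/2$ so that $\rho_{\mathrm{u}} = 0$ and the space is genuinely $n = (4/3)^{d/2}$, the query exponent rises and the total time becomes $(5/3)^{d/2 + o(d)} \approx 2^{0.368d}$, which is worse than claimed. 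To hit the stated pair $(\Sp, \T) = \bigl((4/3)^{d/2}, (3/2)^{d/2}\bigr)$ simultaneously one has to use the low-memory bucketed variant of LSF from~\cite{becker16lsf}, in which filter buckets are enumerated and processed on the fly rather than stored: this keeps the memory footprint at the list size $(4/3)^{d/2 + o(d)}$ while retaining the $(3/2)^{d/2 + o(d)}$ running time. That implementation detail should be made explicit, as the standard ``build the LSF table in memory'' reading of Lemma~\ref{lem:nns} cannot reach the claimed space bound.
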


An important open question is whether these techniques can also be applied to the faster GaussSieve algorithm to solve CVP. The GaussSieve seems to make even more use of the property that the sum/difference of two lattice vectors is also in the lattice, and operations in the GaussSieve in $\cL$ cannot as easily be \textit{mimicked} for the coset $\vc{t} + \cL$. Solving CVP with the GaussSieve with similar complexities is left as an open problem.


\section{Non-adaptive sieving for CVPP}
\label{sec:non}

Our second method for finding closest vectors with heuristic lattice sieving follows a slightly different approach. Instead of focusing only on the total time complexity for one problem instance, we split the algorithm into two phases:
\begin{itemize}
	\item Phase 1: Preprocess the lattice $\cL$, without knowledge of the target $\vc{t}$;
	\item Phase 2: Process the query $\vc{t}$ and output a closest lattice vector $\vc{s} \in \cL$ to $\vc{t}$.
\end{itemize}

Intuitively it may be more important to keep the costs of Phase 2 small, as the preprocessed data can potentially be reused later for other instances on the same lattice. This approach is essentially equivalent to the Closest Vector Problem with Preprocessing (CVPP): preprocess $\cL$ such that when given a target vector $\vc{t}$ later, one can quickly return a closest vector $\vc{s} \in \cL$ to $\vc{t}$. For CVPP however the preprocessed space is usually restricted to be of polynomial size, and the time used for preprocessing the lattice is often not taken into account. Here we will keep track of the preprocessing costs as well, and we do not restrict the output from the preprocessing phase to be of size $\poly(d)$.

\subsubsection{Algorithm description.} To minimize the costs of answering a query, and to do the preprocessing independently of the target vector, we first run a standard SVP sieve, resulting in a large list $L$ of almost all short lattice vectors. Then, after we are given the target vector $\vc{t}$, we use $L$ to reduce the target. Finally, once the resulting vector $\vc{t}' \in \vc{t} + \cL$ can no longer be reduced with our list, we hope that this reduced vector $\vc{t}'$ is the shortest vector in the coset $\vc{t} + \cL$, so that $\vc{0}$ is the closest lattice vector to $\vc{t}'$ and $\vc{s} = \vc{t} - \vc{t}'$ is the closest lattice vector to $\vc{t}$. 

The first phase of this algorithm consists in running a sieve and storing the resulting list in memory (potentially in a nearest neighbor data structure for faster lookups). For this phase either the Nguyen-Vidick sieve or the GaussSieve can be used. The second phase is the same for either method, and is described in Algorithm~\ref{alg:nonadaptive} for the general case of an input list essentially consisting of the $\alpha^{d + o(d)}$ shortest vectors in the lattice. Note that a standard SVP sieve would produce a list of size $(4/3)^{d/2 + o(d)}$ corresponding to $\alpha = \sqrt{4/3}$. 

\begin{algorithm}[!t]
\caption{Non-adaptive sieving (Phase 2) for finding closest vectors}
\label{alg:nonadaptive}
\begin{algorithmic}[1]
\Require A list $L \subset \cL$ of $\alpha^{d/2 + o(d)}$ vectors of norm at most $\alpha \cdot \lambda_1(\cL)$, and $\vc{t} \in \mathbb{R}^d$
\Ensure The output vector $\vc{s}$ is the closest lattice vector to $\vc{t}$ (w.h.p.)
\State Initialize $\vc{t}' \leftarrow \vc{t}$
\For{\textbf{each} $\vc{w} \in L$}
	\If{$\|\vc{t}' - \vc{w}\| \leq \|\vc{t}'\|$}
		\State Replace $\vc{t}' \leftarrow \vc{t}' - \vc{w}$ and restart the \textbf{for}-loop
	\EndIf
\EndFor
\State \Return $\vc{s} = \vc{t} - \vc{t}'$
\end{algorithmic}
\end{algorithm}

\subsubsection{List size.} We first study how large $L$ must be to guarantee that the algorithm succeeds. One might wonder why we do not fix $\alpha = \sqrt{4/3}$ immediately in Algorithm~\ref{alg:nonadaptive}. To see why this choice of $\alpha$ does not suffice, suppose we have a vector $\vc{t}' \in \vc{t} + \cL$ which is no longer reducible with $L$. This implies that $\vc{t}'$ has norm approximately $\sqrt{4/3} \cdot \lambda_1(\cL)$, similar to what happens in the GaussSieve. Now, unfortunately the fact that $\vc{t}'$ cannot be reduced with $L$ anymore, does \textit{not} imply that the closest lattice point to $\vc{t}'$ is $\vc{0}$. In fact, it is more likely that there exists an $\vc{s} \in \vc{t} + \cL$ of norm slightly more than $\sqrt{4/3} \cdot \lambda_1(\cL)$ which is closer to $\vc{t}'$, but which is not used for reductions. 

By the Gaussian heuristic, we expect the distance from $\vc{t}$ and $\vc{t}'$ to the lattice to be $\lambda_1(\cL)$. So to guarantee that $\vc{0}$ is the closest lattice vector to the reduced vector $\vc{t}'$, we need $\vc{t}'$ to have norm at most $\lambda_1(\cL)$. To analyze and prove correctness of this algorithm, we will therefore prove that, under the assumption that the input is a list of the $\alpha^{d + o(d)}$ shortest lattice vectors of norm at most $\alpha \cdot \lambda_1(\cL)$ for a particular choice of $\alpha$, w.h.p.\ the algorithm reduces $\vc{t}$ to a vector $\vc{t}' \in \vc{t} + \cL$ of norm at most $\lambda_1(\cL)$.

To study how to set $\alpha$, we start with the following elementary lemma regarding the probability of reduction between two uniformly random vectors with given norms.

\begin{lemma} \label{lem:1}
Let $v, w > 0$ and let $\vc{v} = v \cdot \vc{e}_v$ and $\vc{w} = w \cdot \vc{e}_w$. Then:
\begin{align}
\pr_{\vc{e}_v, \vc{e}_w \sim \cS^{d-1}}\Big(\|\vc{v} - \vc{w}\|^2 < \|\vc{v}\|^2\Big) \sim \left[1 - \left(\tfrac{w}{2v}\right)^2\right]^{d/2 + o(d)}.
\end{align}
\end{lemma}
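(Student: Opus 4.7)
My plan is to reduce the norm inequality to a condition on the angle between the unit vectors $\vc{e}_v$ and $\vc{e}_w$, and then invoke the standard asymptotic for the surface measure of a spherical cap.

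First I would expand $\|\vc{v}-\vc{w}\|^2 = \|\vc{v}\|^2 - 2\ip{\vc{v}}{\vc{w}} + \|\vc{w}\|^2$, so that the event $\|\vc{v}-\vc{w}\|^2 < \|\vc{v}\|^2$ is equivalent to $\ip{\vc{v}}{\vc{w}} > \tfrac{1}{2}\|\vc{w}\|^2$. Writing $\ip{\vc{v}}{\vc{w}} = v w \cos\theta$ with $\cos\theta = \ip{\vc{e}_v}{\vc{e}_w}$, this becomes $\cos\theta > \tfrac{w}{2v}$. If $w \geq 2v$ the event is impossible, and the right-hand side of the lemma is vacuously $0$ (or makes sense only in the regime $w<2v$, which is the regime that matters for reductions); so from here on I restrict to $c := w/(2v) \in (0,1)$.

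Second, by rotational invariance of the uniform measure on $\cS^{d-1}$, I can condition on $\vc{e}_v$ being fixed and treat $\vc{e}_w$ as a uniform sample. The probability in question is then the normalized surface area of the spherical cap
\begin{equation}
C_c = \{\vc{x}\in\cS^{d-1} : \ip{\vc{e}_v}{\vc{x}} > c\}.
\end{equation}

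Third, I apply the standard asymptotic estimate for spherical cap measure: for any fixed $c\in(0,1)$,
\begin{equation}
\pr_{\vc{x}\sim\cS^{d-1}}\bigl(\ip{\vc{e}_v}{\vc{x}} > c\bigr) = \poly(d)^{-1}\cdot (1-c^2)^{d/2},
\end{equation}
which is the usual estimate obtained by integrating $(1-x^2)^{(d-3)/2}$ over $x\in(c,1]$ (e.g.\ via Laplace's method, or the direct bound $\tfrac{1}{c\sqrt{2\pi d}}(1-c^2)^{(d-1)/2}$ up to lower order terms). Substituting $c = w/(2v)$ and absorbing the polynomial prefactor into the $o(d)$ term of the exponent yields the stated asymptotic.

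The only delicate point is to confirm that the polynomial-in-$d$ prefactor from the cap estimate contributes only $o(d)$ to the exponent, which is immediate since $\log\poly(d) = o(d)$; so there is no substantive obstacle, and the proof is essentially a one-line rewriting followed by a classical volume estimate.
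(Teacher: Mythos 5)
Your proof is correct and follows essentially the same route as the paper: expand the norms to reduce the event to the spherical-cap condition $\ip{\vc{e}_v}{\vc{e}_w} > \tfrac{w}{2v}$, then apply the standard cap-measure asymptotic $(1-c^2)^{d/2+o(d)}$ (the paper simply cites \cite[Lemma 2.1]{becker16lsf} for this estimate, whereas you sketch its derivation, correctly noting that the $\poly(d)$ prefactor is absorbed into the $o(d)$ in the exponent). Your additional remark about the degenerate regime $w \geq 2v$ is a harmless clarification and does not change the argument.
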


\begin{proof}
Expanding $\|\vc{v} - \vc{w}\|^2 = v^2 + w^2 - 2 v w \ip{\vc{e}_v}{\vc{e}_w}$ and $\|\vc{v}\|^2 = v^2$, the condition $\|\vc{v} - \vc{w}\|^2 < \|\vc{v}\|^2$ equals $\frac{w}{2v} < \ip{\vc{e}_v}{\vc{e}_w}$. The result follows from \cite[Lemma 2.1]{becker16lsf}.
\end{proof}

Under Heuristic~\ref{heur:1}, we then obtain a relation between the choice of $\alpha$ for the input list and the expected norm of the reduced vector $\vc{t}'$ as follows.

\begin{lemma} \label{lem:2}
Let $L \subset \alpha \cdot \cS^{d-1}$ be a list of $\alpha^{d + o(d)}$ uniformly random vectors of norm $\alpha > 1$, and let $\vc{v} \in \beta \cdot \cS^{d-1}$ be sampled uniformly at random. Then, for high dimensions $d$, there exists a $\vc{w} \in L$ such that $\|\vc{v} - \vc{w}\| < \|\vc{v}\|$ if and only if
\begin{align}
\alpha^4 - 4 \beta^2 \alpha^2 + 4\beta^2 
< 0. \label{eq:a}
\end{align}
\end{lemma}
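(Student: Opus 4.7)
The plan is to compute the expected number of ``reducers'' in $L$ (list vectors $\vc{w}$ satisfying $\|\vc{v} - \vc{w}\| < \|\vc{v}\|$), show that this expectation is exponentially large precisely when (\ref{eq:a}) holds, and then invoke concentration on each side of the threshold. First I would fix $\vc{v} \in \beta \cdot \cS^{d-1}$ (by rotational symmetry the conditional probability for a single list vector does not depend on $\vc{v}$), and apply Lemma~\ref{lem:1} with $v = \beta$ and $w = \alpha$, giving
\begin{align}
p := \pr_{\vc{w} \sim \alpha \cdot \cS^{d-1}}\Big(\|\vc{v} - \vc{w}\|^2 < \|\vc{v}\|^2\Big) \sim \left[1 - \frac{\alpha^2}{4 \beta^2}\right]^{d/2 + o(d)}.
\end{align}
Since the $|L| = \alpha^{d + o(d)}$ list vectors are independent, the expected number $N$ of reducers satisfies
\begin{align}
\expn[N] = |L| \cdot p \sim \left(\alpha^2 - \frac{\alpha^4}{4 \beta^2}\right)^{d/2 + o(d)}.
\end{align}

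Next I would observe that $\expn[N]$ is exponentially large in $d$ iff $\alpha^2 - \alpha^4/(4\beta^2) > 1$, which after clearing denominators is exactly condition (\ref{eq:a}). For the ``if'' direction (condition holds), I would apply a standard concentration argument: since the indicator variables $\mathbf{1}[\|\vc{v} - \vc{w}_i\| < \|\vc{v}\|]$ are i.i.d.\ Bernoulli with mean $p$, a Chernoff bound gives $\pr[N = 0] \leq \exp(-\Omega(\expn[N]))$, which is doubly-exponentially small, so with overwhelming probability at least one reducer exists. For the ``only if'' direction (condition fails), Markov's inequality gives $\pr[N \geq 1] \leq \expn[N]$, which is exponentially small, so with overwhelming probability no reducer exists. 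The boundary case (when the left-hand side of (\ref{eq:a}) equals $0$) is subexponential and can be absorbed into the $o(d)$ terms.

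The only subtlety I anticipate is that Lemma~\ref{lem:1}'s asymptotic estimate is stated up to subexponential factors, so strictly speaking $\expn[N] = f(\alpha,\beta)^{d/2 + o(d)}$ where $f(\alpha,\beta) = \alpha^2 - \alpha^4/(4\beta^2)$; the sign of $f(\alpha,\beta) - 1$ determines whether $\expn[N]$ grows or decays exponentially, and the concentration statements become clean in the $d \to \infty$ limit. One should also verify that the regime of interest satisfies $\alpha/(2\beta) < 1$ (so that reductions are possible at all) -- but this is automatic, since (\ref{eq:a}) viewed as a quadratic in $\alpha^2$ has real roots only when $\beta^2 \geq 1$ and forces $\alpha^2 < 2\beta^2 + 2\beta\sqrt{\beta^2 - 1}$, well within the range where Lemma~\ref{lem:1} applies nontrivially. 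The argument is thus a clean first/second-moment computation, with the bulk of the work being the algebraic manipulation already displayed to rewrite $\expn[N] > 1$ as (\ref{eq:a}).
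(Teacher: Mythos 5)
Your proof is correct and follows essentially the same route as the paper: apply Lemma~\ref{lem:1} to obtain the per-vector reduction probability $p \sim [1 - \alpha^2/(4\beta^2)]^{d/2+o(d)}$, observe that the expected number of reducers $n p$ is $[\alpha^2 - \alpha^4/(4\beta^2)]^{d/2+o(d)}$, and conclude by noting that the base exceeds $1$ precisely when~\eqref{eq:a} holds. The paper phrases the final step directly as ``$(1-p)^n$ is $o(1)$ if $n \gg 1/p$ and $1-o(1)$ if $n \ll 1/p$,'' whereas you make the concentration explicit via Chernoff and Markov; this is merely a more formal rendering of the same argument, not a genuinely different approach.
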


\begin{proof}
By Lemma~\ref{lem:1} we can reduce $\vc{v}$ with $\vc{w} \in L$ with probability similar to $p = [1 - \frac{\alpha^2}{4\beta^2}]^{d/2 + o(d)}$. Since we have $n = \alpha^{d + o(d)}$ such vectors $\vc{w}$, the probability that none of them can reduce $\vc{v}$ is $(1 - p)^n$, which is $o(1)$ if $n \gg 1/p$ and $1 - o(1)$ if $n \ll 1/p$. Expanding $n \cdot p$, we obtain the given equation~\eqref{eq:a}, where $\alpha^4 - 4 \beta^2 \alpha^2 + 4 \beta^2 < 0$ implies $n \gg 1/p$.
\end{proof}

Note that in our applications, we do not just have a list of $\alpha^{d + o(d)}$ lattice vectors of norm $\alpha \cdot \lambda_1(\cL)$; for any $\alpha_0 \in [1, \alpha]$ we expect $L$ to contain $\alpha_0^{d + o(d)}$ lattice vectors of norm at most $\alpha_0 \cdot \lambda_1(\cL)$. To obtain a reduced vector $\vc{t}'$ of norm $\beta \cdot \lambda_1(\cL)$, we therefore obtain the condition that for \textit{some} value $\alpha_0 \in [1, \alpha]$, it must hold that $\alpha_0^4 - 4 \beta^2 \alpha_0^2 + 4\beta_0^2 < 0$. 

From~\eqref{eq:a} it follows that $p(\alpha^2) = \alpha^4 - 4 \beta^2 \alpha^2 + 4\beta^2$ has two roots $r_1 < 2 < r_2$ for $\alpha^2$, which lie close to $2$ for $\beta \approx 1$. The condition that $p(\alpha_0^2) < 0$ for some $\alpha_0 \leq \alpha$ is equivalent to $\alpha > r_1$, which for $\beta = 1 + o(1)$ implies that $\alpha^2 \geq 2 + o(1)$. This means that asymptotically we must set $\alpha = \sqrt{2}$, and use $n = 2^{d/2 + o(d)}$ input vectors, to guarantee that w.h.p.\ the algorithm succeeds. A sketch of the situation is also given in Figure~\ref{fig:2a}.

\begin{figure}[!t]
\subfloat[For solving \textbf{exact CVP}, we must reduce the vector $\vc{t}$ to a vector $\vc{t}' \in \vc{t} + \cL$ of norm at most $\lambda_1(\cL)$. The nearest lattice point to $\vc{t}'$ lies in a ball of radius approximately $\lambda_1(\cL)$ around $\vc{t}'$ (blue), and almost all the mass of this ball is contained in the (black) ball around $\vc{0}$ of radius $\sqrt{2} \cdot \lambda_1(\cL)$. So if $\vc{s} \in \cL \setminus \{\vc{0}\}$ had lain closer to $\vc{t}'$ than $\vc{0}$, we would have reduced $\vc{t}'$ with $\vc{s}$, since $\vc{s} \in L$.\label{fig:2a}]{%
\includegraphics{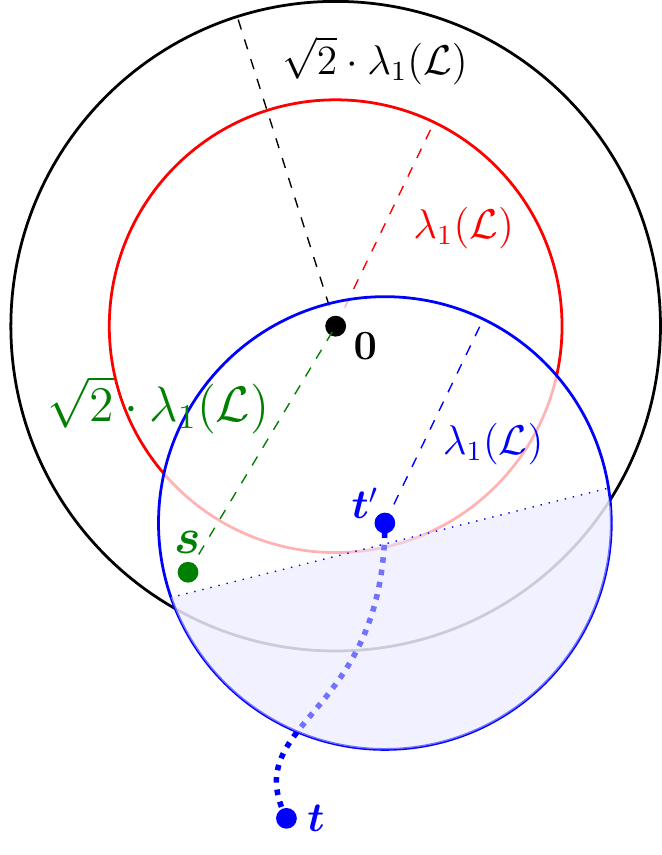}}%
\hfill
\subfloat[For \textbf{variants of CVP}, a choice $\alpha$ for the list size implies a norm $\beta \cdot \lambda_1(\cL)$ of $\vc{t}'$. The nearest lattice vector $\vc{s}$ to $\vc{t}'$ lies within $\delta \cdot \lambda_1(\cL)$ of $\vc{t}'$ ($\delta = 1$ for approx-CVP), so with high probability $\vc{s}$ has norm approximately $(\sqrt{\beta^2 + \delta^2}) \cdot \lambda_1(\cL)$. For $\delta$-BDD, if $\sqrt{\beta^2 + \delta^2} \leq \alpha$ then we expect the nearest point $\vc{s}$ to be in the list $L$. For $\kappa$-CVP, if $\beta \leq \kappa$, then the lattice vector $\vc{t} - \vc{t}'$ has norm at most $\kappa \cdot \lambda_1(\cL)$.\label{fig:2b}]{
\includegraphics{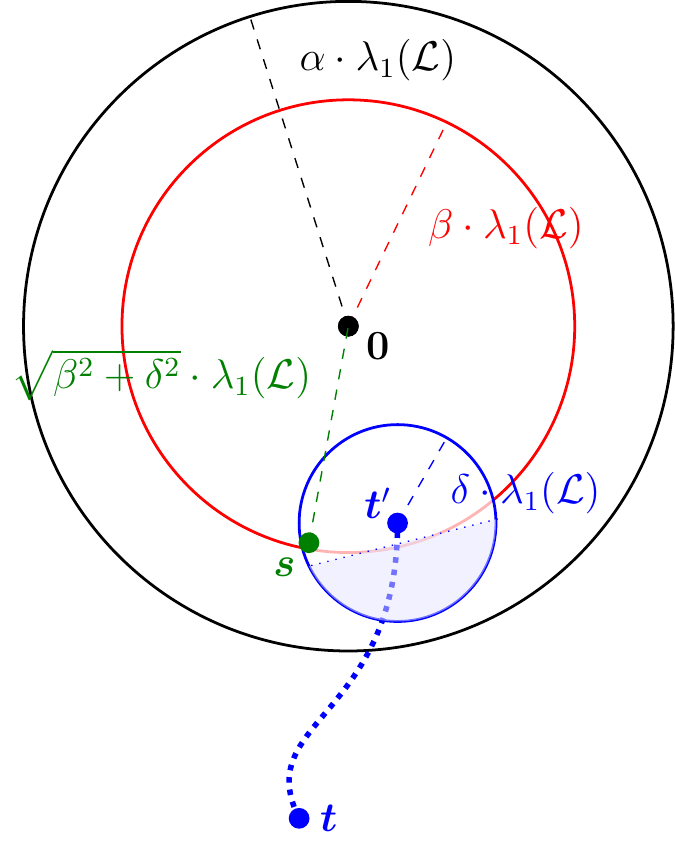}}%
\caption{Comparison of the list size complexity analysis for CVP (left) and BDD/approximate CVP (right). The point $\vc{t}$ represents the target vector, and after a series of reductions with Algorithm~\ref{alg:nonadaptive}, we obtain $\vc{t}' \in \vc{t} + \cL$. Blue balls around $\vc{t}'$ depict regions in which we expect the closest lattice point to $\vc{t}'$ to lie, where the blue shaded area indicates a negligible fraction of this ball~\cite[Lemma 2]{becker16lsf}.\label{fig:2}}
\end{figure}

\subsubsection{Modifying the first phase.} As we will need a larger list of size $2^{d/2 + o(d)}$ to make sure we can solve CVP exactly, we need to adjust Phase 1 of the algorithm as well. Recall that with standard sieving, we reduce vectors iff their angle is at most $\theta = \frac{\pi}{3}$, resulting in a list of size $(\sin \theta)^{-d + o(d)}$. As we now need the output list of the first phase to consist of $2^{d/2 + o(d)} = (\sin \theta')^{-d + o(d)}$ vectors for $\theta' = \frac{\pi}{4}$, we make the following adjustment: only reduce $\vc{v}$ and $\vc{w}$ if their common angle is less than $\frac{\pi}{4}$. For unit length vectors, this condition is equivalent to reducing $\vc{v}$ with $\vc{w}$ iff $\|\vc{v} - \vc{w}\|^2 \leq (2 - \sqrt{2}) \cdot \|\vc{v}\|^2$. This further accelerates nearest neighbor techniques due to the smaller angle $\theta$. Pseudocode for the modified first phase is given in Appendix~\ref{app:alg2}

\subsubsection{Main result.} With the algorithm in place, let us now analyze its complexity for solving CVP. The first phase of the algorithm generates a list of size $2^{d/2 + o(d)}$ by combining pairs of vectors, and naively this can be done in time $\T_1 = 2^{d + o(d)}$ and space $\Sp = 2^{d/2 + o(d)}$, with query complexity $\T_2 = 2^{d/2 + o(d)}$. Using nearest neighbor searching (Lemma~\ref{lem:nns}), the query and preprocessing complexities can be further reduced, leading to the following result.

\begin{theorem} \label{thm:2}
Let $u \in (\frac{1}{2} \sqrt{2}, \sqrt{2})$. Using non-adaptive sieving, we can solve CVP with preprocessing time $\T_1$, space complexity $\Sp$, and query time complexity $\T_2$ as follows:
\begin{align}
\Sp = \T_1 &= \left(\frac{1}{u (\sqrt{2} - u)}\right)^{d/2 + o(d)}, \qquad \T_2 = \left(\frac{\sqrt{2} + u}{2 u}\right)^{d/2 + o(d)}.
\end{align}
\end{theorem}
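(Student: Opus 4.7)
The plan is to feed the list-size analysis from Lemma~\ref{lem:2} into the LSF bound of Lemma~\ref{lem:nns} at angular threshold $\theta = \pi/4$. The discussion preceding the theorem establishes that to guarantee correctness of Phase~2 one must set $\alpha = \sqrt{2}$, so Phase~1 must output a list $L$ of size $n = 2^{d/2 + o(d)}$ consisting of the shortest lattice vectors up to norm $\sqrt{2}\cdot\lambda_1(\cL)$. By Heuristic~\ref{heur:1} the normalised list behaves as $n$ i.i.d.\ uniform points on $\cS^{d-1}$, and since $n = (1/\sin\tfrac{\pi}{4})^{d+o(d)}$ this is exactly the list size to which Lemma~\ref{lem:nns} applies at $\theta = \pi/4$.

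For Phase~1, I would bound the cost of building the LSF data structure with parameters $\alphaq = u\cos\tfrac{\pi}{4}$ and $\alphau = \cos\tfrac{\pi}{4}$ directly from Lemma~\ref{lem:nns}, giving $\T_1 = \Sp = n^{1+\rhou + o(1)}$. For Phase~2, each iteration of Algorithm~\ref{alg:nonadaptive} normalises the current $\vc{t}'$ and asks the data structure for a vector within angle $\pi/4$, costing $n^{\rhoq + o(1)}$; since every successful reduction shrinks $\|\vc{t}'\|$ by a factor bounded away from~$1$, the loop terminates in $\poly(d)$ rounds and $\T_2 = n^{\rhoq + o(1)}$.

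The remaining step is a specialisation of the exponents $\rhoq, \rhou$ of \eqref{eq:main3} at $\theta = \pi/4$, where $\sin^2\theta = \tfrac{1}{2}$, $\cos\theta = \tfrac{\sqrt{2}}{2}$, $\cos 2\theta = 0$ and $\cot^2\theta = 1$. A short algebraic simplification then yields
\begin{align*}
n^{\rhoq} = \left(\frac{\sqrt{2}+u}{2u}\right)^{d/2+o(d)}, \qquad n^{1+\rhou} = \left(\frac{1}{u(\sqrt{2}-u)}\right)^{d/2+o(d)},
\end{align*}
matching the statement. The admissible interval $u \in [\cos\theta, 1/\cos\theta] = [\tfrac{\sqrt{2}}{2}, \sqrt{2}]$ of Lemma~\ref{lem:nns} becomes the open interval in the theorem once the endpoints, where the denominators collapse, are excluded.

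The main obstacle is the heuristic mismatch between Lemma~\ref{lem:nns}, which is stated for a single \emph{random} query on $\cS^{d-1}$, and Phase~2, which issues an adaptive sequence of queries $\vc{t}'/\|\vc{t}'\|$ in which each successor depends on the reducer just chosen. The CVP analogue of Heuristic~\ref{heur:1}, namely that the normalised intermediate $\vc{t}'$ continues to look like a uniform unit vector on its current sphere, has to be invoked each time Lemma~\ref{lem:nns} is applied inside the loop. A secondary point is to verify that the list produced by the modified sieve of Appendix~\ref{app:alg2} really does contain essentially all $n = 2^{d/2+o(d)}$ short lattice vectors of norm at most $\sqrt{2}\cdot\lambda_1(\cL)$; this is plausible because the sieve's angular threshold has been tightened to $\pi/4$, so its fixed point is precisely a ball of radius $\sqrt{2}\cdot\lambda_1(\cL)$, as required by the input specification of Algorithm~\ref{alg:nonadaptive}.
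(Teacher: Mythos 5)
Your proposal is correct and takes essentially the same approach as the paper: set $\theta = \pi/4$ so that $n = (1/\sin\theta)^{d+o(d)} = 2^{d/2+o(d)}$, plug into Lemma~\ref{lem:nns} to get $\T_1 = \Sp = n^{1+\rhou}$ and $\T_2 = n^{\rhoq}$, and simplify the exponents. The paper's own proof is a one-liner stating exactly this; your write-up fills in the algebra (which checks out) and explicitly flags the heuristic points — the adaptive-query issue and the completeness of the Phase~1 list — that the paper silently absorbs into its heuristic assumptions.
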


\begin{proof}
These complexities follow from Lemma~\ref{lem:nns} with $\theta = \frac{\pi}{4}$, noting that the first phase can be performed in time and space $\T_1 = \Sp = n^{1 + \rhou}$, and the second phase in time $\T_2 = n^{\rhoq}$.
\end{proof}

To illustrate the time and space complexities of Theorem~\ref{thm:2}, we highlight three special cases $u$ as follows. The full tradeoff curve for $u \in (\frac{1}{2} \sqrt{2}, \sqrt{2})$ is depicted in Figure~\ref{fig:1}.
\begin{itemize}
	\item Setting $u = \frac{1}{2} \sqrt{2}$, we obtain $\Sp = \T_1 = 2^{d/2 + o(d)}$ and $\T_2 
	\approx 2^{0.2925d + o(d)}$.	
	\item Setting $u = 1$, we obtain $\Sp = \T_1 
	\approx 2^{0.6358 d + o(d)}$ and $\T_2 
	\approx 2^{0.1358 d + o(d)}$. 
	\item Setting $u = \frac{1}{2}(\sqrt{2} + 1)$, we get $\Sp = \T_1 = 2^{d + o(d)}$ and $\T_2 
	\approx 2^{0.0594 d + o(d)}$.
\end{itemize}
The first result shows that the query complexity of non-adaptive sieving is never worse than for adaptive sieving; only the space and preprocessing complexities are worse. The second and third results show that CVP can be solved in significantly less time, even with preprocessing and space complexities bounded by $2^{d + o(d)}$. 

\paragraph{\bf Minimizing the query complexity.} As $u \to \sqrt{2}$, the query complexity keeps decreasing while the memory and preprocessing costs increase. For arbitrary $\eps > 0$, we can set $u = u_\eps \approx \sqrt{2}$ as a function of $\eps$, resulting in asymptotic complexities $\Sp = \T_1 = (1/\eps)^{O(d)}$ and $\T_2 = 2^{\eps d + o(d)}$. This shows that it is possible to obtain a slightly subexponential query complexity, at the cost of superexponential space, by taking $\eps = o(1)$ as a function of $d$. 

\begin{corollary} \label{thm:3}
For arbitrary $\eps > 0$, using non-adaptive sieving we can solve CVPP with preprocessing time and space complexities $(1/\eps)^{O(d)}$, in time $2^{\eps d + o(d)}$. In particular, we can solve CVPP in $2^{o(d)}$ time, using $2^{\omega(d)}$ space and preprocessing.
\end{corollary}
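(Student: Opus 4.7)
The plan is to derive the corollary as a direct asymptotic specialization of Theorem~\ref{thm:2}, obtained by pushing the free parameter $u$ towards its upper endpoint $\sqrt{2}$. First I would introduce a small parameter $\delta \in (0, \tfrac{1}{2})$ and set $u = \sqrt{2}(1-\delta)$, which automatically lies in the admissible interval $(\tfrac{1}{2}\sqrt{2},\sqrt{2})$ required by Theorem~\ref{thm:2}. A one-line algebraic simplification of the two bases appearing there gives
\[
u(\sqrt{2}-u) \;=\; 2\delta(1-\delta), \qquad \frac{\sqrt{2}+u}{2u} \;=\; 1 + \frac{\delta}{2(1-\delta)},
\]
so the tradeoff of Theorem~\ref{thm:2} rewrites as $\Sp = \T_1 = (2\delta(1-\delta))^{-d/2 + o(d)}$ and $\T_2 = (1 + \delta/(2-2\delta))^{d/2 + o(d)}$.

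Next I would calibrate $\delta$ as a function of $\eps$ to match the desired query bound. Taking base-$2$ logarithms, $\T_2 \leq 2^{\eps d + o(d)}$ is equivalent to the implicit condition $\log_2(1 + \delta/(2-2\delta)) \leq 2\eps$, and the Taylor expansion $\log_2(1+x) = x/\ln 2 + O(x^2)$ together with $\delta/(2-2\delta) = \delta/2 + O(\delta^2)$ shows that the tight solution satisfies $\delta = \Theta(\eps)$ as $\eps \to 0$. Substituting back, $2\delta(1-\delta) = \Theta(\eps)$, and the preprocessing/space bound becomes $\Sp = \T_1 = (1/\eps)^{O(d)}$, which is the first half of the statement.

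For the ``in particular'' claim, I would take $\eps = \eps(d)$ to be any function with $\eps(d) \to 0$ as $d \to \infty$, for instance $\eps(d) = 1/\log\log d$. Then $\T_2 = 2^{\eps(d)\,d + o(d)} = 2^{o(d)}$, while $\Sp = \T_1 = (1/\eps)^{O(d)} = 2^{O(d\log(1/\eps))} = 2^{\omega(d)}$ since $\log(1/\eps) \to \infty$.

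The only genuinely nontrivial point, and the main obstacle I would flag, is that Theorem~\ref{thm:2} is stated for $u$ constant in $d$, so its $o(d)$ error term (inherited via Lemma~\ref{lem:nns}) implicitly carries that assumption. For the first half of the corollary this is harmless because $\eps$ is fixed, but for the ``in particular'' statement one must check that letting $u = u(d) \to \sqrt{2}$ slowly enough preserves the $o(d)$ correction in the exponent; any $\eps(d)$ with $\log(1/\eps) = o(d)$ (certainly $1/\log\log d$) suffices, and this is the only place where care is required. Modulo that uniformity check, the argument is a clean algebraic specialization of the tradeoff and nothing further is needed.
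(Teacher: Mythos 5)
Your proof is correct and takes essentially the same approach as the paper — push $u$ toward $\sqrt{2}$ in Theorem~\ref{thm:2} and calibrate $u$ (via $\delta = \Theta(\eps)$) against the desired query exponent — with the difference that you carry out the algebra explicitly rather than merely asserting $u = u_\eps \approx \sqrt{2}$ as the paper does. Your flag about the uniformity of the $o(d)$ term in Lemma~\ref{lem:nns} once $u$ is allowed to depend on $d$ (needed for the ``in particular'' clause) is a genuine subtlety the paper glosses over, and your proposed restriction to slowly-decaying $\eps(d)$ is the right way to address it.
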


Being able to solve CVPP in subexponential time with superexponential preprocessing and memory is neither trivial nor quite surprising. A naive approach to the problem, with this much memory, could for instance be to index the entire fundamental domain of $\cL$ in a hash table. One could partition this domain into small regions, solve CVP for the centers of each of these regions, and store all the solutions in memory. Then, given a query, one looks up which region $\vc{t}$ is in, and returns the answer corresponding to that vector. With a sufficiently fine-grained partitioning of the fundamental domain, the answers given by the look-ups are accurate, and this algorithm probably also runs in subexponential time.

Although it may not be surprising that it is possible to solve CVPP in subexponential time with (super)exponential space, it is not clear what the complexities of other methods would be. Our method presents a clear tradeoff between the complexities, where the constants in the preprocessing exponent are quite small; for instance, we can solve CVPP in time $2^{0.06d + o(d)}$ with less than $2^{d + o(d)}$ memory, which is the same amount of memory/preprocessing of the best provable SVP and CVP algorithms~\cite{aggarwal15, aggarwal15b}. Indexing the fundamental domain may well require much more memory than this. 

\subsection{Bounded Distance Decoding with Preprocessing}

We finally take a look at specific instances of CVP which are easier than the general problem, such as when the target $\vc{t}$ lies unusually close to the lattice. This problem naturally appears in practice, when a private key consists of a \textit{good basis} of a lattice with short basis vectors, and the public key is a \textit{bad basis} of the same lattice. An encryption of a message could then consist of the message being mapped to a lattice point $\vc{v} \in \cL$, and a small error vector $\vc{e}$ being added to $\vc{v}$ ($\vc{t} = \vc{v} + \vc{e}$) to hide $\vc{v}$. If the noise $\vc{e}$ is small enough, then with a good basis one can decode $\vc{t}$ to the closest lattice vector $\vc{v}$, while someone with the bad basis cannot decode correctly. As decoding for arbitrary $\vc{t}$ (solving CVP) is known to be hard even with knowledge of a good basis~\cite{micciancio01e, feige02, regev04d, alekhnovich05}, $\vc{e}$ needs to be very short, and $\vc{t}$ must lie unusually close to the lattice.

So instead of assuming target vectors $\vc{t} \in \mathbb{R}^d$ are sampled at random, suppose that $\vc{t}$ lies at distance at most $\delta \cdot \lambda_1(\cL)$ from $\cL$, for $\delta \in (0,1)$. For adaptive sieving, recall that the list size $(4/3)^{d/2 + o(d)}$ is the minimum initial list size one can hope to use to obtain a list of short lattice vectors; with fewer vectors, one would not be able to solve SVP.\footnote{The recent paper \cite{bai16} discusses how to use less memory in sieving, by using triple- or tuple-wise reductions, instead of the standard pairwise reductions. These techniques may also be applied to adaptive sieving to solve CVP with less memory, at the cost of an increase in the time complexity.} For non-adaptive sieving however, it may be possible to reduce the list size below $2^{d/2 + o(d)}$.

\subsubsection{List size.} Let us again assume that the preprocessed list $L$ contains almost all $\alpha^{d + o(d)}$ lattice vectors of norm at most $\alpha \cdot \lambda_1(\cL)$. The choice of $\alpha$ implies a maximum norm $\beta_{\alpha} \cdot \lambda_1(\cL)$ of the reduced vector $\vc{t}'$, as described in Lemma~\ref{lem:2}. The nearest lattice vector $\vc{s} \in \cL$ to $\vc{t}'$ lies within radius $\delta \cdot \lambda_1(\cL)$ of $\vc{t}'$, and w.h.p.\ $\vc{s} - \vc{t}'$ is approximately orthogonal to $\vc{t}'$; see Figure~\ref{fig:2b}, where the shaded area is asymptotically negligible. Therefore w.h.p.\ $\vc{s}$ has norm at most $(\sqrt{\beta_{\alpha}^2 + \delta^2}) \cdot \lambda_1(\cL)$. Now if $\sqrt{\beta_{\alpha}^2 + \delta^2} \leq \alpha$, then we expect the nearest vector to be contained in $L$, so that ultimately $\vc{0}$ is nearest to $\vc{t}'$. Substituting $\alpha^4 - 4 \beta^2 \alpha^2 + 4 \beta^2 = 0$ and $\beta^2 + \delta^2 \leq \alpha^2$, and solving for $\alpha$, this leads to the following condition on $\alpha$.
\begin{align}
\alpha^2 \geq \tfrac{2}{3} (1 + \delta^2) + \tfrac{2}{3} \sqrt{(1 + \delta^2)^2 - 3 \delta^2} \, . \label{eq:a2}
\end{align}
Taking $\delta = 1$, corresponding to exact CVP, leads to the condition $\alpha \geq \sqrt{2}$ as expected, while in the limiting case of $\delta \to 0$ we obtain the condition $\alpha \geq \sqrt{4/3}$. This matches experimental observations using the GaussSieve, where after finding the shortest vector, newly sampled vectors often cause \textit{collisions} (i.e.\ being reduced to the $\vc{0}$-vector). In other words, Algorithm~\ref{alg:nonadaptive} often reduces target vectors $\vc{t}$ which essentially lie \textit{on} the lattice ($\delta \to 0$) to the $\vc{0}$-vector when the list has size $(4/3)^{d/2 + o(d)}$. This explains why collisions in the GaussSieve are common when the list size grows to size $(4/3)^{d/2 + o(d)}$.

\subsubsection{Main result.} To solve BDD with a target $\vc{t}$ at distance $\delta \cdot \lambda_1(\cL)$ from the lattice, we need the preprocessing to produce a list of almost all $\alpha^{d + o(d)}$ vectors of norm at most $\alpha \cdot \lambda_1(\cL)$, with $\alpha$ satisfying~\eqref{eq:a2}. Similar to the analysis for CVP, we can produce such a list by only doing reductions between two vectors if their angle is less than $\theta$, where now $\theta = \arcsin(1 / \alpha)$. Combining this with Lemma~\ref{lem:1}, we obtain the following result.

\begin{theorem} \label{thm:BDD}
Let $\alpha$ satisfy \eqref{eq:a2} and let $u \in (\sqrt{\frac{\alpha^2 - 1}{\alpha^2}}, \sqrt{\frac{\alpha^2}{\alpha^2 - 1}})$. Using non-adaptive sieving, we can heuristically solve BDD for targets $\vc{t}$ at distance $\delta \cdot \lambda_1(\cL)$ from the lattice, with preprocessing time $\T_1$, space complexity $\Sp$, and query time complexity $\T_2$ as follows:
\begin{align}
  & \qquad \Sp = \left(\frac{1}{1 - (\alpha^2 - 1) (u^2 - \frac{2 u}{\alpha}
 \sqrt{\alpha^2 - 1} + 1)}\right)^{d/2 + o(d)}, \\
 \T_1 &= \max\left\{\Sp, \ (3/2)^{d/2 + o(d)}\right\}, \qquad  \T_2 = \left(\frac{\alpha + u \sqrt{\alpha^2 - 1}}{2 \alpha - \alpha^3 + \alpha^2 u \sqrt{\alpha^2 - 1}}\right)^{d/2 + o(d)}.
\end{align}
\end{theorem}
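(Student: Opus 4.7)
The plan is to reduce the whole proof to a single application of Lemma~\ref{lem:nns}, where the angular threshold $\theta$ and the list size $n$ are pinned down by $\alpha$, and then to substitute the elementary trigonometric identities $\sin^2\theta = 1/\alpha^2$, $\cos\theta = \sqrt{\alpha^2-1}/\alpha$, $\cos 2\theta = (\alpha^2-2)/\alpha^2$, $\cot^2\theta = \alpha^2-1$ into the generic formulas of that lemma.

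First I would argue correctness. Pick $\alpha$ satisfying~\eqref{eq:a2} and assume the preprocessing produces $L$ containing almost all $\alpha^{d+o(d)}$ lattice vectors of norm at most $\alpha\cdot\lambda_1(\cL)$. Lemma~\ref{lem:2}, applied inductively over intermediate radii $\alpha_0 \in [1,\alpha]$, guarantees that Algorithm~\ref{alg:nonadaptive} reduces $\vc{t}$ to some $\vc{t}' \in \vc{t} + \cL$ of norm at most $\beta_\alpha\cdot\lambda_1(\cL)$. Since $\vc{t}$ lies within $\delta\cdot\lambda_1(\cL)$ of $\cL$, so does $\vc{t}'$, and the heuristic near-orthogonality of $\vc{t}'$ and $\vc{s}-\vc{t}'$ (for $\vc{s}\in\cL$ the closest lattice vector to $\vc{t}'$) yields $\|\vc{s}\| \leq (\sqrt{\beta_\alpha^2+\delta^2})\cdot\lambda_1(\cL) \leq \alpha\cdot\lambda_1(\cL)$ by~\eqref{eq:a2}, so $\vc{s}\in L$. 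If $\vc{s}\neq\vc{0}$, the loop would keep reducing, contradicting termination; hence $\vc{s}=\vc{0}$ and $\vc{t}-\vc{t}'$ is the nearest lattice vector, as depicted in Figure~\ref{fig:2b}.

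Next I would fix NNS parameters. The modified first phase described just above the theorem reduces only pairs whose angle is at most $\theta := \arcsin(1/\alpha)$, so under Heuristic~\ref{heur:1} its output is a saturated list of $n = (1/\sin\theta)^{d+o(d)} = \alpha^{d+o(d)}$ vectors. The admissible range $[\cos\theta, 1/\cos\theta]$ from Lemma~\ref{lem:nns} is exactly the hypothesized $u \in (\sqrt{(\alpha^2-1)/\alpha^2},\sqrt{\alpha^2/(\alpha^2-1)})$. Substituting the trigonometric identities into Lemma~\ref{lem:nns} gives $\Sp = n^{1+\rhou}$, and the leading factor $n = \alpha^{d+o(d)}$ cancels the $1/\alpha^2$ in the numerator of $n^{\rhou}$, producing the stated denominator $1 - (\alpha^2-1)(u^2 - \tfrac{2u}{\alpha}\sqrt{\alpha^2-1}+1)$. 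The query bound $\T_2 = n^{\rhoq}$ simplifies to the stated ratio $(\alpha + u\sqrt{\alpha^2-1})/(2\alpha - \alpha^3 + \alpha^2 u\sqrt{\alpha^2-1})$ after clearing common powers of $\alpha$ in numerator and denominator. The $(3/2)^{d/2+o(d)}$ term in $\T_1$ is the minimum cost of running a standard spherical-LSF SVP sieve to produce the short lattice vectors in the first place, so $\T_1 = \max\{\Sp,(3/2)^{d/2+o(d)}\}$.

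The main non-algebraic obstacle is justifying Heuristic~\ref{heur:1} for the modified sieve operating at angle $\theta < \pi/3$, and verifying that the reduced targets $\vc{t}'$ arising during Algorithm~\ref{alg:nonadaptive} look sufficiently uniform on $\cS^{d-1}$ for Lemma~\ref{lem:nns} to apply. Everything else is a substitution exercise.
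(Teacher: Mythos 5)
Your proposal follows the paper's own proof essentially verbatim: both apply Lemma~\ref{lem:nns} at $\theta = \arcsin(1/\alpha)$ with $n = \alpha^{d+o(d)}$, take $\Sp = \T_1 = n^{1+\rhou}$ and $\T_2 = n^{\rhoq}$, and cap $\T_1$ below by the cost of the initial SVP sieve since Phase~1 cannot combine vectors at angle exceeding $\pi/3$. Your explicit trigonometric substitutions and the derivation of the stated closed forms are correct, and your closing caveats about the heuristic uniformity of the list and of the reduced targets $\vc{t}'$ are reasonable flags but match the heuristic assumptions the paper itself leaves implicit, so this is the same argument, just written out in more detail.
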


\begin{proof}
These complexities directly follow from applying Lemma~\ref{lem:nns} with $\theta = \arcsin(1/\alpha)$, and again observing that Phase 1 can be performed in time $\T_1 = n^{1 + \rhou}$ and space $\Sp = n^{1 + \rhou}$, while Phase 2 takes time $\T_2 = n^{\rhoq}$. Note that we cannot combine vectors whose angles are larger than $\frac{\pi}{3}$ in Phase 1, which leads to a lower bound on the preprocessing time complexity $\T_1$ based on the costs of solving SVP.
\end{proof}

Theorem~\ref{thm:BDD} is a generalization of Theorem~\ref{thm:2}, as the latter can be derived from the former by substituting $\delta = 1$ above. To illustrate the results, Figure~\ref{fig:1} considers two special cases:
\begin{itemize} 
	\item For $\delta = \frac{1}{2}$, we find $\alpha \approx 1.1976$, leading to $\Sp \approx 2^{0.2602d + o(d)}$ and $\T_2 = 2^{0.1908d + o(d)}$ when minimizing the space complexity.
	\item For $\delta \to 0$, we have $\alpha \to \sqrt{4/3} \approx 1.1547$. The minimum space complexity is therefore $\Sp = (4/3)^{d/2 + o(d)}$, with query complexity $\T_2 = 2^{0.1610d + o(d)}$.
\end{itemize}
In the limit of $u \to \sqrt{\frac{\alpha^2}{\alpha^2 - 1}}$ we need superexponential space/preprocessing $\Sp, \T_1 \to 2^{\omega(d)}$ and a subexponential query time $\T_2 \to 2^{o(d)}$ for all $\delta > 0$.
	
\subsection{Approximate Closest Vector Problem with Preprocessing}

Given a lattice $\cL$ and a target vector $\vc{t} \in \mathbb{R}^d$, approximate CVP with approximation factor $\kappa$ asks to find a vector $\vc{s} \in \cL$ such that $\|\vc{s} - \vc{t}\|$ is at most a factor $\kappa$ larger than the real distance from $\vc{t}$ to $\cL$. For random instances $\vc{t}$, by the Gaussian heuristic this means that a lattice vector counts as a solution iff it lies at distance at most $\kappa \cdot \lambda_1(\cL)$ from $\vc{t}$.

\subsubsection{List size.} Instead of reducing $\vc{t}$ to a vector $\vc{t}'$ of norm at most $\lambda_1(\cL)$ as is needed for solving exact CVP, we now want to make sure that the reduced vector $\vc{t}'$ has norm at most $\kappa \cdot \lambda_1(\cL)$. If this is the case, then the vector $\vc{t} - \vc{t}'$ is a lattice vector lying at distance at most $\kappa \cdot \lambda_1(\cL)$, which w.h.p.\ qualifies as a solution. This means that instead of substituting $\beta = 1$ in Lemma~\ref{lem:2}, we now substitute $\beta = \kappa$. This leads to the condition that $\alpha_0^4 - 4\kappa^2 \alpha_0^2 + 4 \beta^2 < 0$ for some $\alpha_0 \leq \alpha$. By a similar analysis $\alpha^2$ must therefore be larger than the smallest root $r_1 = 2\kappa (\kappa - \sqrt{\kappa^2 - 1})$ of this quadratic polynomial in $\alpha^2$. This immediately leads to the following condition on $\alpha$:
\begin{align}
\alpha^2 \geq 2 \kappa \left(\kappa - \sqrt{\kappa^2 - 1}\right). \label{eq:a3}
\end{align}
A sanity check shows that $\kappa = 1$, corresponding to exact CVP, indeed results in $\alpha \geq \sqrt{2}$, while in the limit of $\kappa \to \infty$ a value $\alpha \approx 1$ suffices to obtain a vector $\vc{t}'$ of norm at most $\kappa \cdot \lambda_1(\cL)$. In other words, to solve approximate CVP with very large (constant) approximation factors, a preprocessed list of size $(1 + \eps)^{d + o(d)}$ suffices.

\subsubsection{Main result.} Similar to the analysis of CVPP, we now take $\theta = \arcsin(1/\alpha)$ as the angle with which to reduce vectors in Phase 1, so that the output of Phase 1 is a list of almost all $\alpha^{d + o(d)}$ shortest lattice vectors of norm at most $\alpha \cdot \lambda_1(\cL)$. Using a smaller angle $\theta$ for reductions again means that nearest neighbor searching can speed up the reductions in both Phase 1 and Phase 2 even further. The exact complexities follow from Lemma~\ref{lem:nns}.

\begin{theorem} \label{thm:aCVP}
Using non-adaptive sieving with spherical LSF, we can heuristically solve $\kappa$-CVP with similar complexities as in Theorem~\ref{thm:BDD}, where now $\alpha$ must satisfy \eqref{eq:a3}.
\end{theorem}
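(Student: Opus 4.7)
The plan is to mirror the proof of Theorem~\ref{thm:BDD}, since structurally the two results differ only in which relation $\alpha$ must satisfy (\eqref{eq:a2} vs.~\eqref{eq:a3}) and, correspondingly, in what the reduced vector $\vc{t}'$ is required to certify. The heart of the argument is already laid out in the paragraphs preceding the theorem; what remains is to knit it together and feed it into Lemma~\ref{lem:nns}.

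First I would verify correctness of Algorithm~\ref{alg:nonadaptive} on a preprocessed list $L$ that contains (essentially all of) the $\alpha^{d+o(d)}$ shortest lattice vectors of norm at most $\alpha\cdot\lambda_1(\cL)$, where $\alpha$ satisfies~\eqref{eq:a3}. Applying Lemma~\ref{lem:2} with $\beta = \kappa$, the bound $\alpha^2 \geq 2\kappa(\kappa - \sqrt{\kappa^2-1})$ ensures that there exists $\alpha_0 \in [1,\alpha]$ with $\alpha_0^4 - 4\kappa^2 \alpha_0^2 + 4\kappa^2 < 0$, so as long as $\|\vc{t}'\| > \kappa\cdot\lambda_1(\cL)$ the heuristic guarantees that $L$ contains (w.h.p.) a vector $\vc{w}$ with $\|\vc{t}' - \vc{w}\| < \|\vc{t}'\|$; the loop therefore terminates with $\|\vc{t}'\| \leq \kappa\cdot\lambda_1(\cL)$, and $\vc{s} = \vc{t} - \vc{t}' \in \cL$ lies at distance $\leq \kappa\cdot\lambda_1(\cL)$ from $\vc{t}$, which by the Gaussian heuristic is a valid $\kappa$-CVP solution.

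Next I would handle Phase 1. Exactly as in the CVPP and BDDP constructions, I modify the NV-sieve/GaussSieve to perform reductions only between pairs whose angle is below $\theta := \arcsin(1/\alpha)$, which under Heuristic~\ref{heur:1} produces a list of size $n = (\sin\theta)^{-d + o(d)} = \alpha^{d + o(d)}$ consisting of (essentially) the shortest lattice vectors up to norm $\alpha\cdot\lambda_1(\cL)$. Feeding this $\theta$ into Lemma~\ref{lem:nns} with free parameter $u \in [\cos\theta, 1/\cos\theta] = [\sqrt{(\alpha^2-1)/\alpha^2},\, \sqrt{\alpha^2/(\alpha^2-1)}]$ yields $\Sp = n^{1 + \rhou + o(1)}$ for storing the nearest-neighbor data structure, $\T_2 = n^{\rhoq + o(1)}$ for the Phase~2 queries, and $\T_1 = n^{1+\rhou+o(1)}$ for building the structure. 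Since both the list-size formula and the LSF exponents $\rhoq,\rhou$ depend on $\alpha$ only through $\theta = \arcsin(1/\alpha)$, the closed-form expressions for $\Sp$, $\T_1$, $\T_2$ come out \emph{syntactically identical} to those in Theorem~\ref{thm:BDD}; only the admissible range of $\alpha$ changes.

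The one subtlety—and the main (mild) obstacle—is the $\T_1 \geq (3/2)^{d/2 + o(d)}$ clause. When $\alpha \geq 2/\sqrt{3}$ we have $\theta \leq \pi/3$, and the stricter-angle sieve of Phase 1 runs directly at cost $n^{1+\rhou+o(1)}$. When $\alpha < 2/\sqrt{3}$ (the large-$\kappa$ regime), the target angle $\theta$ exceeds $\pi/3$, so pairwise reductions no longer shrink norms at that angle; in this case one simply runs a standard SVP sieve producing the $(4/3)^{d/2+o(d)}$ shortest vectors and then discards those of norm above $\alpha\cdot\lambda_1(\cL)$—this still yields the required list but inherits the SVP sieving lower bound $(3/2)^{d/2 + o(d)}$ on preprocessing. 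Taking the maximum of the two bounds reproduces the $\T_1$ formula of Theorem~\ref{thm:BDD}, completing the proof.
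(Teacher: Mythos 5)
Your proposal is correct and follows the same route the paper takes: the paper itself gives no separate proof of Theorem~\ref{thm:aCVP}, relying on the preceding \emph{List size} discussion (Lemma~\ref{lem:2} with $\beta = \kappa$, yielding the root $r_1 = 2\kappa(\kappa-\sqrt{\kappa^2-1})$ and hence~\eqref{eq:a3}) together with the Phase~1/Phase~2 machinery already set up in the proof of Theorem~\ref{thm:BDD} (sieve at angle $\theta = \arcsin(1/\alpha)$, then apply Lemma~\ref{lem:nns}), so the closed-form exponents depend on the problem only through $\alpha$. One small imprecision: you justify the $\T_1 \geq (3/2)^{d/2+o(d)}$ clause only for the regime $\alpha < \sqrt{4/3}$ (where $\theta > \pi/3$ and one falls back to a standard SVP sieve), whereas the bound is meant to hold uniformly --- even when $\alpha \geq \sqrt{4/3}$ the sieve must still perform $\pi/3$-angle reductions to shrink long initial vectors down to norm $O(\lambda_1(\cL))$, so the SVP sieving cost floor applies throughout, which matters when $n^{1+\rhou} < (3/2)^{d/2}$ (e.g.\ $u = \cos\theta$ with $\alpha$ slightly above $\sqrt{4/3}$). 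This is the same (terse) point the paper makes in the proof of Theorem~\ref{thm:BDD} with the remark that vectors at angle $>\pi/3$ cannot be combined in Phase~1.
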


Note that only the dependence of $\alpha$ on $\kappa$ is different, compared to the dependence of $\alpha$ on $\delta$ for bounded distance decoding. The complexities for $\kappa$-CVP arguably decrease \textit{faster} than for $\delta$-BDD: for instance, for $\kappa \approx 1.0882$ we obtain the same complexities as for BDD with $\delta = \frac{1}{2}$, while $\kappa = \sqrt{4/3} \approx 1.1547$ leads to the same complexities as for BDD with $\delta \to 0$. Two further examples are illustrated in Figure~\ref{fig:1}:

\begin{itemize}
	\item For $\kappa = 2$, we have $\alpha \approx 1.1976$, which for $u \approx 0.5503$ leads to $\Sp = \T_1 = 2^{0.2602 d + o(d)}$ and $\T_2 = 2^{0.1908 d + o(d)}$, and for $u = 1$ leads to $\Sp = \T_1 = 2^{0.3573 d + o(d)}$ and $\T_2 = 2^{0.0971 d + o(d)}$.
	\item For $\kappa \to \infty$, we have $\alpha \to 1$, i.e.\ the required preprocessed list size approaches $2^{o(d)}$ as $\kappa$ grows. For sufficiently large $\kappa$, we can solve $\kappa$-CVP with a preprocessed list of size $2^{\eps d + o(d)}$ in at most $2^{\eps d + o(d)}$ time. The preprocessing time is given by $2^{0.2925 d + o(d)}$.
\end{itemize}

The latter result shows that for any superconstant approximation factor $\kappa = \omega(1)$, we can solve the corresponding approximate closest vector problem with preprocessing in subexponential time, with an exponential preprocessing time complexity $2^{0.292d + o(d)}$ for solving SVP and generating a list of short lattice vectors, and a subexponential space complexity required for Phase 2. In other words, even without superexponential preprocessing/memory we can solve CVPP with large approximation factors in subexponential time.

To compare this result with previous work, note that the lower bound on $\alpha$ from \eqref{eq:a3} tends to $1 + 1/(8 \kappa^2) + O(\kappa^{-4})$ as $\kappa$ grows. The query space and time complexities are further both proportional to $\alpha^{\Theta(d)}$. To obtain a polynomial query complexity and polynomial storage after the preprocessing phase, we can solve for $\kappa$, leading to the following result.

\begin{corollary} \label{cor:acvpp-poly}
With non-adaptive sieving we can heuristically solve approximate CVPP with approximation factor $\kappa$ in polynomial time with polynomial-sized advice iff $\kappa = \Omega(\sqrt{d / \log d})$.
\end{corollary}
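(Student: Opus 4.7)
The plan is to substitute the condition \eqref{eq:a3} on $\alpha$ into the query complexities of Theorem~\ref{thm:aCVP} and then solve for the range of $\kappa$ for which both the space and the query time become polynomial. Since both quantities are of the form $\alpha^{\Theta(d)}$, the requirement of polynomial advice and polynomial query time translates into the single condition $d \log \alpha = O(\log d)$, i.e.\ $\alpha = 1 + O((\log d)/d)$. The task therefore reduces to determining how small we can force $\alpha - 1$ to be as a function of $\kappa$.

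First I would expand the right-hand side of \eqref{eq:a3} for large $\kappa$. Writing $\sqrt{\kappa^2 - 1} = \kappa \sqrt{1 - 1/\kappa^2} = \kappa - \tfrac{1}{2\kappa} - \tfrac{1}{8\kappa^3} + O(\kappa^{-5})$, we obtain
\begin{align}
2 \kappa \bigl(\kappa - \sqrt{\kappa^2 - 1}\bigr) = 1 + \tfrac{1}{4\kappa^2} + O(\kappa^{-4}),
\end{align}
so the sharpest admissible choice for $\alpha$ is $\alpha^2 = 1 + \tfrac{1}{4\kappa^2} + O(\kappa^{-4})$, which yields $\alpha = 1 + \tfrac{1}{8\kappa^2} + O(\kappa^{-4})$, as already anticipated in the text preceding the corollary.

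Next I would combine this with the polynomiality condition. On the one hand, if $\kappa = \Omega(\sqrt{d/\log d})$, then $1/\kappa^2 = O((\log d)/d)$, so $\log \alpha \leq \alpha - 1 = O((\log d)/d)$, and therefore $\alpha^{\Theta(d)} = d^{O(1)}$; plugging this into Theorem~\ref{thm:aCVP} (at, say, the balanced parameter $u = 1$, which stays in the allowed range because $\alpha \to 1$) shows that both $\Sp$ and $\T_2$ are polynomial in $d$, while the preprocessing may remain exponential but still outputs advice of polynomial size. On the other hand, if $\kappa = o(\sqrt{d/\log d})$, then $\alpha - 1 = \Omega((\log d)/d) \cdot \omega(1)$, so $d \log \alpha = \omega(\log d)$ and at least one of $\Sp$ and $\T_2$ grows superpolynomially; since $\alpha^{\Theta(d)}$ is a lower bound on both the advice size and the query time, this rules out a polynomial-time, polynomial-advice algorithm of the form produced by non-adaptive sieving. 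Combining the two directions gives the stated ``iff''.

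The main obstacle is mostly bookkeeping rather than mathematical: one must verify that the constants hidden in the $\Theta(d)$ exponent of $\Sp$ and $\T_2$ are genuinely bounded away from zero (so that the converse direction actually forces $\kappa = \Omega(\sqrt{d/\log d})$, rather than leaving a gap), and that the range of admissible $u$ in Theorem~\ref{thm:aCVP} does not collapse as $\alpha \to 1$. Both follow from Lemma~\ref{lem:nns} by continuity in $\theta = \arcsin(1/\alpha)$ near $\theta = \pi/2$, so the asymptotic expansion above is what really drives the corollary.
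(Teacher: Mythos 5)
Your proposal is correct and takes essentially the same route as the paper: both expand \eqref{eq:a3} to obtain $\alpha = 1 + \Theta(1/\kappa^2)$, then equate this with the polynomiality requirement $\alpha^{\Theta(d)} = d^{O(1)}$, i.e.\ $\alpha = 1 + O((\log d)/d)$, and solve for $\kappa$. You are somewhat more explicit about the two directions of the ``iff'' and about the non-degeneracy of the admissible $u$-range as $\alpha \to 1$, but the underlying argument is the one the paper gives.
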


\begin{proof}
The query time and space complexities are given by $\alpha^{\Theta(d)}$, where $\alpha = 1 + \Theta(1 / \kappa^2)$. To obtain polynomial complexities in $d$, we must have $\alpha^{\Theta(d)} = d^{O(1)}$, or equivalently:
\begin{align}
1 + \Theta\left(\frac{1}{\kappa^2}\right) = \alpha = d^{O(1/d)} = \exp \, O\left(\frac{\log d}{d}\right) = 1 + O\left(\frac{\log d}{d}\right).
\end{align}
Solving for $\kappa$ leads to the given relation between $\kappa$ and $d$.
\end{proof}

Apart from the heuristic assumptions, this approximation factor is equivalent to Aharonov and Regev~\cite{aharonov04}, who showed that the decision version of CVPP with approximation factor $\kappa = \Omega(\sqrt{d / \log d})$ can provably be solved in polynomial time. This further (heuristically) improves upon results of~\cite{lagarias90b, dadush14}, who are able to solve search-CVPP with polynomial time and space complexities for $\kappa = O(d^{3/2})$ and $\kappa = \Omega(d / \sqrt{\log d})$ respectively. Assuming the heuristic assumptions are valid, Corollary~\ref{cor:acvpp-poly} closes the gap between these previous results for decision-CVPP and search-CVPP with a rather simple algorithm: (1) preprocess the lattice by storing all $d^{O(1)}$ shortest vectors of the lattice in a list; and (2) apply Algorithm~\ref{alg:nonadaptive} to this list and the target vector to find an approximate closest vector. Note that nearest neighbor techniques only affect leading constants; even without nearest neighbor searching this would heuristically result in a polynomial time and space algorithm for $\kappa$-CVPP with $\kappa = \Omega(\sqrt{d / \log d})$. An interesting open problem would be to see if this result can be made provable for arbitrary lattices, without any heuristic assumptions.


\section*{Acknowledgments}

The author is indebted to L\'{e}o Ducas, whose initial ideas and suggestions on this topic motivated work on this paper. The author further thanks Vadim Lyubashevsky and Oded Regev for their comments on the relevance of a subexponential time CVPP algorithm requiring (super)exponential space. The author is supported by the SNSF ERC Transfer Grant CRETP2-166734 FELICITY.


\bibliographystyle{alpha}
\bibliography{SACDatabase}

\newcommand{\etalchar}[1]{$^{#1}$}
\begin{thebibliography}{LvdPdW12}

\bibitem[ADRS15]{aggarwal15}
Divesh Aggarwal, Daniel Dadush, Oded Regev, and Noah Stephens{-}Davidowitz.
\newblock Solving the shortest vector problem in $2^n$ time via discrete
  {G}aussian sampling.
\newblock In {\em STOC}, pages 733--742, 2015.

\bibitem[ADS15]{aggarwal15b}
Divesh Aggarwal, Daniel Dadush, and Noah Stephens{-}Davidowitz.
\newblock Solving the closest vector problem in $2^n$ time -- the discrete
  gaussian strikes again!
\newblock In {\em FOCS}, 2015.

\bibitem[AEVZ02]{agrell02}
Erik Agrell, Thomas Eriksson, Alexander Vardy, and Kenneth Zeger.
\newblock Closest point search in lattices.
\newblock {\em IEEE Transactions on Information Theory}, 48(8):2201--2214, Aug
  2002.

\bibitem[AKKV05]{alekhnovich05}
Misha Alekhnovich, Subhash Khot, Guy Kindler, and Nisheeth Vishnoi.
\newblock Hardness of approximating the closest vector problem with
  pre-processing.
\newblock In {\em FOCS}, pages 216--225, 2005.

\bibitem[AKS01]{ajtai01}
Mikl\'{o}s Ajtai, Ravi Kumar, and Dandapani Sivakumar.
\newblock A sieve algorithm for the shortest lattice vector problem.
\newblock In {\em STOC}, pages 601--610, 2001.

\bibitem[AR04]{aharonov04}
Dorit Aharonov and Oded Regev.
\newblock Lattice problems in {$\mathsf{NP} \cap \mathsf{coNP}$}.
\newblock In {\em FOCS}, pages 362--371, 2004.

\bibitem[BDGL16]{becker16lsf}
Anja Becker, L\'{e}o Ducas, Nicolas Gama, and Thijs Laarhoven.
\newblock New directions in nearest neighbor searching with applications to
  lattice sieving.
\newblock In {\em SODA}, pages 10--24, 2016.

\bibitem[BGJ14]{becker14}
Anja Becker, Nicolas Gama, and Antoine Joux.
\newblock A sieve algorithm based on overlattices.
\newblock In {\em ANTS}, pages 49--70, 2014.

\bibitem[BGJ15]{becker15nns}
Anja Becker, Nicolas Gama, and Antoine Joux.
\newblock Speeding-up lattice sieving without increasing the memory, using
  sub-quadratic nearest neighbor search.
\newblock {\em Cryptology ePrint Archive, Report 2015/522}, pages 1--14, 2015.

\bibitem[BL16]{becker16cp}
Anja Becker and Thijs Laarhoven.
\newblock Efficient (ideal) lattice sieving using cross-polytope {LSH}.
\newblock In {\em AFRICACRYPT}, pages 3--23, 2016.

\bibitem[BLS16]{bai16}
Shi Bai, Thijs Laarhoven, and Damien Stehl\'{e}.
\newblock Tuple lattice sieving.
\newblock In {\em ANTS}, 2016.

\bibitem[BNvdP14]{bos14}
Joppe~W. Bos, Michael Naehrig, and Joop van~de Pol.
\newblock Sieving for shortest vectors in ideal lattices: a practical
  perspective.
\newblock {\em Cryptology ePrint Archive, Report 2014/880}, pages 1--23, 2014.

\bibitem[CN11]{chen11}
Yuanmi Chen and Phong~Q. Nguy\^{e}n.
\newblock {BKZ} 2.0: Better lattice security estimates.
\newblock In {\em ASIACRYPT}, pages 1--20, 2011.

\bibitem[DRS14]{dadush14}
Daniel Dadush, Oded Regev, and Noah Stephens{-}Davidowitz.
\newblock On the closest vector problem with a distance guarantee.
\newblock In {\em CCC}, pages 98--109, 2014.

\bibitem[FBB{\etalchar{+}}14]{fitzpatrick14}
Robert Fitzpatrick, Christian Bischof, Johannes Buchmann, \"{O}zg\"{u}r
  Dagdelen, Florian G\"{o}pfert, Artur Mariano, and Bo-Yin Yang.
\newblock Tuning {GaussSieve} for speed.
\newblock In {\em LATINCRYPT}, pages 288--305, 2014.

\bibitem[FM02]{feige02}
Ulrich Feige and Daniele Micciancio.
\newblock The inapproximability of lattice and coding problems with
  preprocessing.
\newblock In {\em CCC}, pages 32--40, 2002.

\bibitem[FP85]{fincke85}
Ulrich Fincke and Michael Pohst.
\newblock Improved methods for calculating vectors of short length in a
  lattice.
\newblock {\em Mathematics of Computation}, 44(170):463--471, 1985.

\bibitem[GNR10]{gama10}
Nicolas Gama, Phong~Q. Nguy\^{e}n, and Oded Regev.
\newblock Lattice enumeration using extreme pruning.
\newblock In {\em EUROCRYPT}, pages 257--278, 2010.

\bibitem[IKMT14]{ishiguro14}
Tsukasa Ishiguro, Shinsaku Kiyomoto, Yutaka Miyake, and Tsuyoshi Takagi.
\newblock Parallel {G}auss {S}ieve algorithm: Solving the {SVP} challenge over
  a $128$-dimensional ideal lattice.
\newblock In {\em PKC}, pages 411--428, 2014.

\bibitem[IM98]{indyk98}
Piotr Indyk and Rajeev Motwani.
\newblock Approximate nearest neighbors: Towards removing the curse of
  dimensionality.
\newblock In {\em STOC}, pages 604--613, 1998.

\bibitem[Kan83]{kannan83}
Ravi Kannan.
\newblock Improved algorithms for integer programming and related lattice
  problems.
\newblock In {\em STOC}, pages 193--206, 1983.

\bibitem[Laa15a]{laarhoven15crypto}
Thijs Laarhoven.
\newblock Sieving for shortest vectors in lattices using angular
  locality-sensitive hashing.
\newblock In {\em CRYPTO}, pages 3--22, 2015.

\bibitem[Laa15b]{laarhoven15nns}
Thijs Laarhoven.
\newblock Tradeoffs for nearest neighbors on the sphere.
\newblock 2015.

\bibitem[LdW15]{laarhoven15latincrypt}
Thijs Laarhoven and Benne de~Weger.
\newblock Faster sieving for shortest lattice vectors using spherical
  locality-sensitive hashing.
\newblock In {\em LATINCRYPT}, pages 101--118, 2015.

\bibitem[LLS90]{lagarias90b}
Jeffrey~C. Lagarias, Hendrik~W. Lenstra, and Claus-Peter Schnorr.
\newblock Korkin-zolotarev bases and successive minima of a lattice and its
  reciprocal lattice.
\newblock {\em Combinatorica}, 10(4):333--348, 1990.

\bibitem[LvdPdW12]{laarhoven12kolkata}
Thijs Laarhoven, Joop van~de Pol, and Benne de~Weger.
\newblock Solving hard lattice problems and the security of lattice-based
  cryptosystems.
\newblock {\em Cryptology ePrint Archive, Report 2012/533}, pages 1--43, 2012.

\bibitem[MB16]{mariano16pdp}
Artur Mariano and Christian Bischof.
\newblock Enhancing the scalability and memory usage of {HashSieve} on
  multi-core {CPUs}.
\newblock In {\em PDP}, 2016.

\bibitem[Mic01]{micciancio01e}
Daniele Micciancio.
\newblock The hardness of the closest vector problem with preprocessing.
\newblock {\em IEEE Transactions on Information Theory}, 47(3):1212--1215,
  2001.

\bibitem[MLB15]{mariano15}
Artur Mariano, Thijs Laarhoven, and Christian Bischof.
\newblock Parallel (probable) lock-free {HashSieve}: a practical sieving
  algorithm for the {SVP}.
\newblock In {\em ICPP}, pages 590--599, 2015.

\bibitem[MLB16]{mariano16}
Artur Mariano, Thijs Laarhoven, and Christian Bischof.
\newblock A parallel variant of {LDSieve} for the {SVP} on lattices.
\newblock 2016.

\bibitem[MODB14]{mariano14b}
Artur Mariano, \"{O}zg\"{u}r Dagdelen, and Christian Bischof.
\newblock A comprehensive empirical comparison of parallel {ListSieve} and
  {GaussSieve}.
\newblock In {\em Euro-Par 2014}, pages 48--59, 2014.

\bibitem[MS11]{milde11}
Benjamin Milde and Michael Schneider.
\newblock A parallel implementation of {GaussSieve} for the shortest vector
  problem in lattices.
\newblock In {\em PACT}, pages 452--458, 2011.

\bibitem[MTB14]{mariano14}
Artur Mariano, Shahar Timnat, and Christian Bischof.
\newblock Lock-free {GaussSieve} for linear speedups in parallel high
  performance {SVP} calculation.
\newblock In {\em SBAC-PAD}, pages 278--285, 2014.

\bibitem[MV10a]{micciancio10}
Daniele Micciancio and Panagiotis Voulgaris.
\newblock A deterministic single exponential time algorithm for most lattice
  problems based on {V}oronoi cell computations.
\newblock In {\em STOC}, pages 351--358, 2010.

\bibitem[MV10b]{micciancio10b}
Daniele Micciancio and Panagiotis Voulgaris.
\newblock Faster exponential time algorithms for the shortest vector problem.
\newblock In {\em SODA}, pages 1468--1480, 2010.

\bibitem[MW15]{micciancio15}
Daniele Micciancio and Michael Walter.
\newblock Fast lattice point enumeration with minimal overhead.
\newblock In {\em SODA}, pages 276--294, 2015.

\bibitem[NV08]{nguyen08}
Phong~Q. Nguy\^{e}n and Thomas Vidick.
\newblock Sieve algorithms for the shortest vector problem are practical.
\newblock {\em Journal of Mathematical Cryptology}, 2(2):181--207, 2008.

\bibitem[PS09]{pujol09}
Xavier Pujol and Damien Stehl\'{e}.
\newblock Solving the shortest lattice vector problem in time $2^{2.465n}$.
\newblock {\em Cryptology ePrint Archive, Report 2009/605}, pages 1--7, 2009.

\bibitem[Reg04]{regev04d}
Oded Regev.
\newblock Improved inapproximability of lattice and coding problems with
  preprocessing.
\newblock {\em IEEE Transactions on Information Theory}, 50(9):2031--2037,
  2004.

\bibitem[Sch87]{schnorr87}
Claus-Peter Schnorr.
\newblock A hierarchy of polynomial time lattice basis reduction algorithms.
\newblock {\em Theoretical Computer Science}, 53(2--3):201--224, 1987.

\bibitem[Sch11]{schneider11}
Michael Schneider.
\newblock Analysis of {Gauss-Sieve} for solving the shortest vector problem in
  lattices.
\newblock In {\em WALCOM}, pages 89--97, 2011.

\bibitem[Sch13]{schneider13}
Michael Schneider.
\newblock Sieving for short vectors in ideal lattices.
\newblock In {\em AFRICACRYPT}, pages 375--391, 2013.

\bibitem[SE94]{schnorr94}
Claus-Peter Schnorr and Martin Euchner.
\newblock Lattice basis reduction: Improved practical algorithms and solving
  subset sum problems.
\newblock {\em Mathematical Programming}, 66(2--3):181--199, 1994.

\bibitem[SG15]{svp}
Michael Schneider and Nicolas Gama.
\newblock {SVP} challenge, 2015.

\bibitem[Ste16]{stephens16}
Noah Stephens{-}Davidowitz.
\newblock Dimension-preserving reductions between lattice problems.
\newblock {\em Available at http://noahsd.com/latticeproblems.pdf.}, 2016.

\bibitem[WLTB11]{wang11}
Xiaoyun Wang, Mingjie Liu, Chengliang Tian, and Jingguo Bi.
\newblock Improved {N}guyen-{V}idick heuristic sieve algorithm for shortest
  vector problem.
\newblock In {\em ASIACCS}, pages 1--9, 2011.

\bibitem[WSSJ14]{wang14}
Jingdong Wang, Heng~Tao Shen, Jingkuan Song, and Jianqiu Ji.
\newblock Hashing for similarity search: A survey.
\newblock {\em arXiv:1408.2927 [cs.DS]}, pages 1--29, 2014.

\bibitem[ZPH13]{zhang13}
Feng Zhang, Yanbin Pan, and Gengran Hu.
\newblock A three-level sieve algorithm for the shortest vector problem.
\newblock In {\em SAC}, pages 29--47, 2013.

\end{thebibliography}


\appendix

\section{Pseudocode of SVP algorithms}
\label{app:alg}

Algorithms~\ref{alg:nv} and \ref{alg:gauss} present pseudo-code for the (sieve part of the) original Nguyen-Vidick sieve and the GaussSieve, respectively, as described in Section~\ref{sec:pre}. For the Nguyen-Vidick sieve, the presented algorithm is a more intuitive but equivalent version of the original sieve; see~\cite[Appendix B]{laarhoven15crypto} for details on this equivalence.

\begin{algorithm}[!h]
\caption{The quadratic Nguyen-Vidick sieve for finding shortest vectors}
\label{alg:nv}
\begin{algorithmic}[1]
\Require An input list $L \subset \cL$ of $(4/3)^{d/2 + o(d)}$ vectors of norm at most $R$
\Ensure The output list $L' \subset \cL$ has $(4/3)^{d/2 + o(d)}$ vectors of norm at most $\gamma \cdot R$
\State Initialize an empty list $L'$ 
\For{\textbf{each} $\vc{w}_1, \vc{w}_2 \in L$}
	\If{$\|\vc{w}_1 - \vc{w}_2\| \leq \gamma R$}
		\State Add $\vc{w}_1 - \vc{w}_2$ to the list $L'$
	\EndIf
\EndFor
\State \Return $L'$
\end{algorithmic}
\end{algorithm}

\begin{algorithm}[!h]
\caption{The GaussSieve algorithm for finding shortest vectors}
\label{alg:gauss}
\begin{algorithmic}[1]
\Require A basis $B$ of a lattice $\cL(B)$ 
\Ensure The algorithm returns a shortest lattice vector
\State Initialize an empty list $L$ and an empty stack $S$
\Repeat
	\State Get a vector $\vc{v}$ from the stack (or sample a new one if $S = \emptyset$) 
	\For{\textbf{each} $\vc{w} \in L$}
		\If{$\|\vc{v} - \vc{w}\| \leq \|\vc{v}\|$}
			\State Replace $\vc{v} \leftarrow \vc{v} - \vc{w}$
		\EndIf
		\If{$\|\vc{w} - \vc{v}\| \leq \|\vc{w}\|$}
			\State Replace $\vc{w} \leftarrow \vc{w} - \vc{v}$
			\State Move $\vc{w}$ from the list $L$ to the stack $S$ (unless $\vc{w} = \vc{0}$)
		\EndIf
	\EndFor
	\If{$\vc{v}$ has changed}
		\State Add $\vc{v}$ to the stack $S$ (unless $\vc{v} = \vc{0}$)
	\Else
		\State Add $\vc{v}$ to the list $L$ (unless $\vc{v} = \vc{0}$)
	\EndIf
\Until{$\vc{v}$ is a shortest vector}
\State \Return $\vc{v}$
\end{algorithmic}
\end{algorithm}

\section{Pseudocode of Phase 1 for non-adaptive sieving}
\label{app:alg2}

To generate a list of the $\alpha^{d + o(d)}$ shortest lattice vectors with the GaussSieve, rather than the $(4/3)^{d/2 + o(d)}$ lattice vectors one would get with standard sieving, we relax the reductions: reducing if $\|\vc{v} - \vc{w}\| < \|\vc{v}\|$ corresponds to an angle $\pi/3$ between $\vc{v}$ and $\vc{w}$, leading to a list size $(1/\sin(\frac{\pi}{3}))^{d + o(d)} = (4/3)^{d/2 + o(d)}$. To obtain a list of size $\alpha^{d + o(d)}$, we reduce vectors if their angle is less than $\theta = \arcsin(1/\alpha)$, which for vectors $\vc{v}, \vc{w}$ of similar norm corresponds to the following condition:
\begin{align}
\|\vc{v} - \vc{w}\| < \sqrt{2 (1 - \cos \theta)} \cdot \|\vc{v}\| = \sqrt{2 - \frac{2}{\alpha} \sqrt{\alpha^2 - 1}} \cdot \|\vc{v}\|.
\end{align}
This leads to the modified GaussSieve described in Algorithm~\ref{alg:nonadaptive0}.

\begin{algorithm}[!t]
\caption{The non-adaptive GaussSieve (Phase 1) for finding closest vectors}
\label{alg:nonadaptive0}
\begin{algorithmic}[1]
\Require A basis $B$ of a lattice $\cL(B)$, a parameter $\alpha > 1$ 
\Ensure The output list $L$ contains $\alpha^{d + o(d)}$ vectors of norm at most $\alpha \cdot \lambda_1(\cL)$
\State Initialize an empty list $L$ and an empty stack $S$
\State Let $\alpha_0 = \max\{\alpha, \sqrt{4/3}\}$
\Repeat
	\State Get a vector $\vc{v}$ from the stack (or sample a new one if $S = \emptyset$) 
	\For{\textbf{each} $\vc{w} \in L$}
		\If{$\|\vc{v} - \vc{w}\|^2 \leq (2 - \frac{2}{\alpha_0} \sqrt{\alpha_0^2 - 1}) \cdot \|\vc{v}\|^2$}
			\State Replace $\vc{v} \leftarrow \vc{v} - \vc{w}$
		\EndIf
		\If{$\|\vc{w} - \vc{v}\|^2 \leq (2 - \frac{2}{\alpha_0} \sqrt{\alpha_0^2 - 1}) \cdot \|\vc{w}\|^2$}
			\State Replace $\vc{w} \leftarrow \vc{w} - \vc{v}$
			\State Move $\vc{w}$ from the list $L$ to the stack $S$ (unless $\vc{w} = \vc{0}$)
		\EndIf
	\EndFor
	\If{$\vc{v}$ has changed}
		\State Add $\vc{v}$ to the stack $S$ (unless $\vc{v} = \vc{0}$)
	\Else
		\State Add $\vc{v}$ to the list $L$ (unless $\vc{v} = \vc{0}$)
	\EndIf
\Until{$\vc{v}$ is a shortest vector}
\State \Return $L$
\end{algorithmic}
\end{algorithm}

\end{document}